\numberwithin{equation}{section}
\numberwithin{algorithm}{section}
\theoremstyle{plain}
\newtheorem{theorem}{Theorem}[section]
\newtheorem{proposition}[theorem]{Proposition}
\newtheorem{lemma}[theorem]{Lemma}
\newtheorem{corollary}[theorem]{Corollary}
\theoremstyle{definition}
\newtheorem{definition}[theorem]{Definition}
\theoremstyle{remark}
\newcommand{\tp}{{\scriptscriptstyle\mathsf{T}}}
\let\O\undefined
\DeclareMathOperator{\O}{O}
\DeclareMathOperator{\SO}{SO}
\DeclareMathOperator{\tr}{tr}
\DeclareMathOperator{\diag}{diag}
\DeclareMathOperator{\rank}{rank}
\DeclareMathOperator{\sinc}{sinc}
\DeclareMathOperator{\spn}{span}
\DeclareMathOperator{\GL}{GL}
\DeclareMathOperator{\V}{V}
\DeclareMathOperator{\im}{im}
\DeclareMathOperator{\Gr}{Gr}
\DeclareMathOperator{\Graff}{Graff}
\DeclareMathOperator{\T}{\mathsf{T}}
\begin{document}
\title{Numerical Algorithms on the Affine Grassmannian}
\author[L.-H.~Lim]{Lek-Heng~Lim}
\address{Computational and Applied Mathematics Initiative, Department of Statistics,
University of Chicago, Chicago, IL 60637-1514.}
\email{lekheng@galton.uchicago.edu}
\author[K.~S.-W.~Wong]{Ken~Sze-Wai~Wong}
\address{Department of Statistics,
University of Chicago, Chicago, IL 60637-1514.}
\email{kenwong@uchicago.edu}
\author[K.~Ye]{Ke Ye}
\address{KLMM, Academy of Mathematics and Systems Science, Chinese Academy of Sciences, Beijing 100190, China}
\email{keyk@amss.ac.cn}
\begin{abstract}
The affine Grassmannian is a noncompact smooth manifold that parameterizes all affine subspaces of a fixed dimension. It is a natural generalization of Euclidean space, points being zero-dimensional affine subspaces. We will realize the affine Grassmannian as a matrix manifold and extend Riemannian optimization algorithms including steepest descent, Newton method, and conjugate gradient, to real-valued functions on the affine Grassmannian. Like their counterparts for the Grassmannian, these algorithms are in the style of Edelman--Arias--Smith --- they rely only on standard numerical linear algebra and are readily computable.
\end{abstract}

\subjclass[2010]{14M15, 90C30}

\keywords{affine Grassmannian, 
affine subspaces, 
manifold optimization}

\maketitle
\section{Introduction}

A $k$-dimensional affine subspace of $\mathbb{R}^n$, denoted $\mathbf{A} + b$, is a $k$-dimensional linear subspace $\mathbf{A} \subseteq \mathbb{R}^n$ translated by a displacement vector $b \in \mathbb{R}^n$. The set of all $k$-dimensional affine subspaces in $\mathbb{R}^n$ constitutes a smooth manifold called the \emph{affine Grassmannian}, denoted $\Graff(k,n)$, an analogue of the usual Grassmannian $\Gr(k, n)$ that parameterizes $k$-dimensional linear subspaces in $\mathbb{R}^n$.

The affine Grassmannian is a relatively obscure object compared to its ubiquitous cousin, the Grassmannian. Nevertheless, it is $\Graff(k,n)$, which like $\mathbb{R}^n$ is a non-compact manifold, that is the natural generalization of Euclidean space --- points are zero-dimensional affine subspaces and so $\Graff(0,n) = \mathbb{R}^n$. The non-compactness makes $\Graff(k,n)$ harder to study than $\Gr(k,n)$, which is compact. The two main objectives of our article are to (i) develop the concrete foundations for Edelman--Arias--Smith-style \cite{EAS} optimization algorithms on the affine Grassmannian; (ii) explicitly describe three such algorithms: steepest decent, conjugate gradient, and Newton method.

The  aforementioned ``Edelman--Arias--Smith-style'' deserves special  elaboration. By this, we mean that we do not view our manifold in an abstract fashion comprising charts glued together; instead we emphasize the use of global coordinates in the form of matrices for efficient computations. The affine Grassmannian then becomes  a concrete computational platform (like $\mathbb{R}^n$) on which geodesics, exponential maps, parallel transports, Riemannian gradient and Hessian, etc, may all be efficiently computed using standard numerical linear algebra.

In fact, a main reason for the widespread applicability of the Grassmannian is the existence of  several excellent choices of global matrix coordinates, allowing subspaces to be represented as matrices and thereby  the use of a vast range of algorithms in numerical linear algebra \cite{AMSbook, AMS, AMSV, EAS}. Such concrete realizations of an abstract manifold is essential for application purposes. By providing a corresponding set of tools for the affine Grassmannian, we effectively extend the wide range of data analytic techniques that uses the Grassmannian as a model for linear subspaces \cite{ER, HRS, HamLee2008, HHH, LZ, MYDF, SLLM, TVF, VidMaSas2005} to affine subspaces.

Before this work, the affine Grassmannian, as used in the sense\footnote{We would like to caution the reader that the term `affine Grassmannian'' is now used far more commonly to refer to another very different object; see \cite{AR, FG, L}. In this article, it will be used exclusively in the sense of Definition~\ref{def:graff}. If desired, `Grassmannian of affine subspaces' may be used to avoid ambiguity.}  of this article, i.e., the manifold that parameterizes $k$-dimensional affine subspaces in $\mathbb{R}^n$, has received scant attention in both pure and applied mathematics. To the best of our knowledge, our work is the first to study it systematically. We summarize our contributions in following: 
\begin{itemize}
\item In Section~\ref{sec:aff}, we show that the affine Grassmannian is a Riemannian manifold that can be embedded as an open submanifold of the Grassmannian. We introduce some basic systems of global coordinates: affine coordinates, orthogonal affine coordinates, and projective affine coordinates. These simple coordinate systems are convenient in proofs but are inadequate when it comes to actual computations.

\item In Section~\ref{sec:global}, we introduce two more sophisticated systems of coordinates that will be critical to our optimization algorithms --- Stieffel coordinates and projection coordinates --- representing points on the affine Grassmannian as $(n+1) \times (k+1)$ matrices with orthonormal columns and as $(n+1) \times (n+1)$ orthogonal projection matrices respectively. We establish a result that allows us to switch between these two systems of coordinates.

\item In Section~\ref{sec:objects}, we describe the common differential geometric objects essential in our optimization algorithms --- tangent spaces, exponential maps, geodesics, parallel transports, gradients, Hessians --- concretely in terms of Stiefel  coordinates and projection coordinates. In particular, we will see that once expressed as matrices in either coordinate  system, these objects become readily computable via standard numerical linear algebra.

\item In Section~\ref{sec:optim}, we describe (in pseudocodes) steepest descent, Newton method, and conjugate gradient in Stiefel coordinates and the former two in projection coordinates. 

\item In Section~\ref{sec:numerical}, we report the results of our numerical experiments on two test problems: (a) a nonlinear nonconvex optimization problem that arises from a coupling of a symmetric eigenvalue problem with a quadratic fractional programming problem, and (b) the problem of computing the Fr\'echet/Karcher mean of two affine subspaces. These problems are judiciously chosen --- they are nontrivial and yet their exact solutions may be determined in closed form, which in turn allows us to ascertain whether our algorithms indeed converge to their actual global optimizers. In the extensive tests we carried out on both problems, the iterates generated by our algorithms converge to the true solutions in every instance.
\end{itemize}

\section{Affine Grassmannian}\label{sec:aff}

The affine Grassmannian was first described in \cite{KR} but has received relatively little attention compared to the Grassmannian of linear subspaces $\Gr(k,n)$. Aside from a brief discussion in \cite[Section~9.1.3]{Nicolaescu}, we are unaware of any systematic treatment. Nevertheless, given that it naturally parameterizes all $k$-dimensional affine subspaces in $\mathbb{R}^n$, it is evidently an important object that could rival the usual Grassmannian in practical applicability. To distinguish it from a different but identically-named object,\footnotemark[1] we may also refer to it as the \emph{Grassmannian of affine subspaces}.

We will establish basic properties of the affine Grassmannian with a view towards Edelman--Arias--Smith-type optimization algorithms. These results are neither difficult nor surprising, certainly routine to the experts, but  have not appeared before elsewhere to the best of our knowledge.

We remind the reader of some basic terminologies. A  \emph{$k$-plane}  is a $k$-dimensional linear subspace and  a \emph{$k$-flat} is a $k$-dimensional affine subspace.  A \emph{$k$-frame} is an ordered  basis of a $k$-plane and we will regard it as an $n \times k$ matrix whose columns $a_1,\dots, a_k$ are the basis vectors. A \emph{flag} is a strictly increasing sequence of nested linear subspaces, $\mathbf{X}_0\subset \mathbf{X}_{1} \subset \mathbf{X}_2\subset \cdots $. A flag is said to be \emph{complete} if $\dim \mathbf{X}_k= k$, \emph{finite} if $k =0,1,\dots,n$, and \emph{infinite} if $k\in \mathbb{N} \cup \{0\}$. We write $\Gr(k,n)$ for the \emph{Grassmannian} of $k$-planes in $\mathbb{R}^n$, $\V(k,n)$ for the \emph{Stiefel manifold} of orthonormal $k$-frames, and $\O(n) \coloneqq \V(n,n)$ for the \emph{orthogonal group}. We may regard $\V(k,n)$ as a homogeneous space,
\begin{equation}\label{eq:stief}
\V(k,n) \cong \O(n)/\O(n-k),
\end{equation}
or more concretely as the set of $n \times k$ matrices with orthonormal columns.  
There is a right action of the orthogonal group $\O(k)$ on $\V(k,n)$: For $Q\in \O(k)$ and $A \in \V(k,n)$, the action  yields $AQ \in \V(k,n)$ and the resulting homogeneous space is $\Gr(k,n)$, i.e.,
\begin{equation}\label{eq:grass}
\Gr(k,n) \cong \V(k,n)/\O(k) \cong \O(n)/\bigl(\O(n-k) \times \O(k)\bigr).
\end{equation}
By \eqref{eq:grass}, $\mathbf{A} \in \Gr(k,n)$ may be identified with the equivalence class of its orthonormal $k$-frames $\{ AQ \in \V(k,n): Q \in \O(k)\}$. Note $\spn(AQ) = \spn(A)$ for $Q \in \O(k)$. 
\begin{definition}[Affine Grassmannian]\label{def:graff}
Let $k<n$ be positive integers. The \emph{Grassmannian of $k$-dimensional affine subspaces} in $\mathbb{R}^n$ or Grassmannian of $k$-flats in $\mathbb{R}^n$, denoted by $\Graff(k,n)$, is the set of all $k$-dimensional  affine subspaces of $\mathbb{R}^n$. For an abstract vector space $V$, we write $\Graff_k(V)$ for the set of $k$-flats in $V$.
\end{definition}
This set-theoretic definition  reveals little about the rich geometry behind $\Graff(k,n)$, which we will see  is a smooth Riemannian manifold intimately related to the Grassmannian $\Gr(k+1,n+1)$.

Throughout this article, a boldfaced letter $\mathbf{A}$ will always denote a subspace and the corresponding normal typeface letter $A$ will then denote a matrix of basis vectors (often but not necessarily orthonormal) of $\mathbf{A}$.
We denote a $k$-dimensional affine subspace as $\mathbf{A}+b \in \Graff(k,n)$ where $\mathbf{A} \in \Gr(k,n)$ is a $k$-dimensional linear subspace and $b \in \mathbb{R}^n$ is the displacement of $\mathbf{A}$ from the origin. If $A = [a_1,\dots, a_k] \in \mathbb{R}^{n \times k}$ is a basis of $\mathbf{A}$, then
\begin{equation}\label{eq:affine}
\mathbf{A}+b \coloneqq \{\lambda_1 a_1 + \dots + \lambda_k a_k + b \in \mathbb{R}^n : \lambda_1,\dots,\lambda_k \in \mathbb{R} \}.
\end{equation}
The notation $\mathbf{A} + b$ may be taken to mean a coset of the subgroup $\mathbf{A}$ in the additive group $\mathbb{R}^n$ or the Minkowski sum of the sets $\mathbf{A}$ and $\{b\}$ in the Euclidean space $\mathbb{R}^n$. The dimension of  $\mathbf{A} + b$ is defined to be the dimension of the vector space $\mathbf{A}$.  As one would expect of a coset representative, the displacement vector $b$ is not unique: For any $a \in \mathbf{A}$, we have $\mathbf{A} + b = \mathbf{A} + (a+ b)$.

Since a $k$-dimensional affine subspace of $\mathbb{R}^n$ may be described by a $k$-dimensional subspace of $\mathbb{R}^n$ and a displacement vector in $\mathbb{R}^n$, it might be tempting to guess that $\Graff(k,n)$ is identical to $\Gr(k,n) \times \mathbb{R}^n$. However, as we have seen, the representation of an affine subspace as $\mathbf{A} + b$ is not unique and we emphasize that
\[
\Graff(k,n) \ne \Gr(k,n) \times \mathbb{R}^n.
\]
Although $\Graff(k,n)$ can be regarded as a quotient of $\Gr(k,n) \times \mathbb{R}^n$, this description is neither necessary nor helpful for our purpose  and we will not pursue this point of view in our article.

We may choose an orthonormal basis for $\mathbf{A}$ so that $A \in \V(k,n)$ and choose $b$ to be orthogonal to $\mathbf{A}$ so that $A^\tp  b = 0$.  Hence we may always represent $\mathbf{A}+b \in \Graff(k,n)$ by a matrix $[A, b_0] \in \mathbb{R}^{n \times (k+1)}$ where $A^\tp  A = I$ and $A^\tp  b_0 = 0$; in this case we call $[A,b_0]$ \emph{orthogonal affine coordinates}. A moment's thought would reveal that any two orthogonal affine coordinates $[A, b_0], [A',b_0'] \in \mathbb{R}^{n \times (k+1)}$  of the same affine subspace $\mathbf{A} + b$ must have $A' = AQ$ for some $Q \in \O(k)$ and $b_0' = b_0$.

We will not insist on using orthogonal affine coordinates at all times as they can be unnecessarily restrictive, especially in proofs. Without these orthogonality conditions, a matrix $[A,b_0] \in \mathbb{R}^{n \times (k+1)}$ that represents an affine subspace $\mathbf{A} + b$ in the sense of \eqref{eq:affine} is called its \emph{affine coordinates}.

Our main goal is to show that the vast array of optimization techniques \cite{AMSbook, AMS, AMSV, EAS, newgrass} may be adapted to the affine Grassmannian. 
In this regard, it is the following view of $\Graff(k,n)$ as an embedded open submanifold of $\Gr(k+1,n+1)$  that will prove most useful. Our construction of this embedding is illustrated in Figure~\ref{fig:plane} and formally stated in Theorem~\ref{thm:alg}.
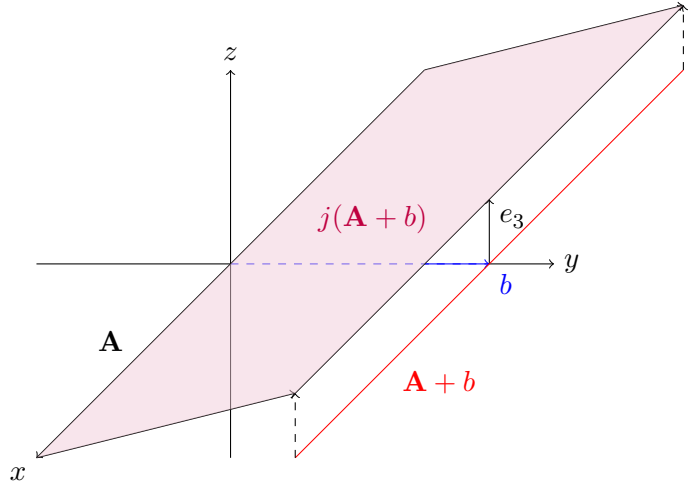
\begin{figure}[!ht]
\centering
\begin{tikzpicture}[scale=0.86]
\draw[black] (-3,0)--(0,0);
\draw[dashed,blue] (0,0)--(4,0);
\draw[->] (4,0) -- (5,0)node[anchor=west] {$y$};
\draw[->] (0,-3) -- (0,3)node[anchor=south] {$z$};
\node [above left,black] at (-1.5,-1.5) {$\mathbf{A}$}; 
\draw[black][->] (3,3) -- (-3,-3);
\node [below left,black] at (-3,-3) {$x$};
\draw[blue][->] (3,0)--(4,0);
\node [below right,blue] at (4,0) {$b$}; 
\draw[black][->]  (4,0)-- (4,1);
\node [below right,black] at (4,1) {$e_3$}; 
\draw[red] (1,-3) -- (7,3);
\node [below right,red] at (2.5,-1.5) {$\mathbf{A}+b$}; 
\draw[dashed][->]  (1,-3)-- (1,-2);
\draw[dashed][->]  (7,3)-- (7,4);
\draw[black][->] (1,-2) -- (7,4);
\draw[black][->] (-3,-3)--(1,-2);
\draw[black][->] (3,3)-- (7,4);
\fill[purple!20,opacity=0.5] (-3,-3) -- (1,-2) -- (7,4) -- (3,3) -- cycle;
\node[purple] at (2.2,0.7) {$j(\mathbf{A}+b)$};
\end{tikzpicture} 
\caption{The affine subspace $\mathbf{A}+b$ is given by the  $x$-axis $\mathbf{A}$ displaced by $b$ along the $y$-axis. The embedding  $j : \Graff(k,n) \to \Gr(k+1,n+1)$ takes $\mathbf{A} + b$ to the smallest $2$-plane containing $\mathbf{A}$ and $b+e_3$, where $e_3$ is a unit vector along the $z$-axis.}
\label{fig:plane}
\end{figure}

We remind the reader that a Grassmannian is equipped with a Radon probability measure \cite[Section~3.9]{Mattila}. All statements referring to  a measure on $\Gr(k,n)$ will be with respect to this.
\begin{theorem}\label{thm:alg}
Let $n \ge 2$ and $k \le n$.
The affine Grassmannian 
$\Graff(k,n)$ is an open submanifold of $\Gr(k+1,n+1)$ whose complement has codimension at least two and measure zero. For concreteness, we will use the map 
\begin{equation}\label{eq:j}
j : \Graff(k,n) \to \Gr(k+1,n+1), \quad \mathbf{A}+b \mapsto \spn(\mathbf{A}\cup \{ b +e_{n+1} \} ),
\end{equation}
where $e_{n+1} = (0,\dots,0,1)^\tp \in \mathbb{R}^{n+1}$ as our default embedding map.
\end{theorem}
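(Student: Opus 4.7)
The plan is to split the theorem into four parts: (i) $j$ is well-defined and injective; (ii) the image of $j$ equals the open subset $\mathcal{U} := \{\mathbf{Y} \in \Gr(k+1,n+1) : \mathbf{Y} \not\subseteq \mathbb{R}^n \times \{0\}\}$; (iii) its complement is a closed submanifold of small codimension; and (iv) $j$ and $j^{-1}$ are smooth, so $\Graff(k,n)$ inherits the structure of an open submanifold of $\Gr(k+1,n+1)$ via $j$.

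For (i) and (ii), if $\mathbf{A} + b = \mathbf{A} + b'$ then $b' - b \in \mathbf{A}$, so viewing $\mathbf{A} \subset \mathbb{R}^n \times \{0\} \subset \mathbb{R}^{n+1}$ we get $b' + e_{n+1} \in \spn(\mathbf{A} \cup \{b + e_{n+1}\})$; hence $j$ is independent of the coset representative. Since $b + e_{n+1}$ has last coordinate $1$, it is not in $\mathbf{A}$, so $\dim j(\mathbf{A}+b) = k+1$. To identify the image, introduce the affine hyperplane $H := \mathbb{R}^n \times \{1\}$: by construction $H \cap j(\mathbf{A}+b) = \{(a + b, 1) : a \in \mathbf{A}\}$, which projects onto $\mathbf{A} + b$ under deletion of the last coordinate. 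Conversely, given $\mathbf{Y} \in \mathcal{U}$, the intersection $\mathbf{Y} \cap H$ is a $k$-flat in $H$ (automatic transversality, since $\mathbf{Y}$ contains a vector with nonzero last coordinate), and projecting to $\mathbb{R}^n$ gives the unique preimage of $\mathbf{Y}$ under $j$, simultaneously proving injectivity.

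For (iii), the complement $\Gr(k+1,n+1) \setminus \mathcal{U}$ is precisely the image of the natural closed embedding $\Gr(k+1,n) \hookrightarrow \Gr(k+1,n+1)$ that sends a $(k+1)$-plane of $\mathbb{R}^n$ to itself viewed in $\mathbb{R}^n \times \{0\}$. Using $\dim \Gr(k+1,m) = (k+1)(m-k-1)$, this submanifold has codimension $(k+1)(n-k) - (k+1)(n-k-1) = k+1$, which is $\ge 2$ whenever $k \ge 1$, and in any case yields measure zero with respect to the Radon measure on $\Gr(k+1,n+1)$; in particular $\mathcal{U}$ is open and dense.

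For (iv), smoothness of $j$ is routine since taking linear spans is polynomial in matrix (or Pl\"ucker) coordinates. For the inverse, near $\mathbf{Y}_0 \in \mathcal{U}$ choose a basis matrix $Y_0 \in \mathbb{R}^{(n+1) \times (k+1)}$ whose last row is $(0,\dots,0,1)$, which is possible because $\mathbf{Y}_0 \not\subseteq \mathbb{R}^n \times \{0\}$. For nearby $Y$, the condition that $YG$ has last row $(0,\dots,0,1)$ is a linear system in $G$ that is nonsingular at $Y = Y_0$, so the implicit function theorem produces a smooth solution $G(Y)$; the last column of $YG(Y)$ then has the form $(b(\mathbf{Y}), 1)^\tp$ while the first $k$ columns span a $k$-plane $\mathbf{A}(\mathbf{Y}) \subset \mathbb{R}^n \times \{0\}$, giving the smooth local inverse $\mathbf{Y} \mapsto \mathbf{A}(\mathbf{Y}) + b(\mathbf{Y})$. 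The main subtlety I anticipate is precisely this smooth-inverse construction; it ultimately reduces to standard transversality, namely that $\mathbf{Y} \cap H$ depends smoothly on $\mathbf{Y}$ whenever $\mathbf{Y}$ is transverse to $\mathbb{R}^n \times \{0\}$ --- exactly the defining condition of $\mathcal{U}$.
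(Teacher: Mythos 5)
Your argument is correct and follows the same overall strategy as the paper: the same map $j$, the same identification of the complement with a copy of $\Gr(k+1,n)$ sitting inside $\mathbb{R}^n\times\{0\}$, and the same codimension count $(k+1)(n-k)-(k+1)(n-k-1)=k+1\ge 2$. The difference lies in how much of the ``open submanifold'' claim is actually proved. The paper checks injectivity algebraically, writes $j$ in orthogonal affine coordinates to see that it is smooth, and then essentially asserts openness of the image before describing the complement; it does not construct a smooth inverse. You supply that missing piece twice over: once geometrically, by slicing with the affine hyperplane $H=\mathbb{R}^n\times\{1\}$ (which also yields a cleaner proof of injectivity and an exact description of the image as the set of $(k+1)$-planes not contained in $\mathbb{R}^n\times\{0\}$), and once analytically, via the normalization $G(Y)$ of a local basis. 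The only blemish is in that last step: the requirement that $YG$ have last row $(0,\dots,0,1)$ is a system of $k+1$ linear equations in the $(k+1)^2$ entries of $G$, so it is underdetermined rather than ``nonsingular''; what you actually need is that the system has full row rank whenever the last row $y^\tp$ of $Y$ is nonzero, which lets you select a smooth branch of invertible solutions through $G=I$, for instance $G(Y)=I+y(e_{k+1}-y)^\tp/\lVert y\rVert^2$, invertible for $Y$ near $Y_0$. With that wording repaired, your proof is complete and in fact more detailed than the paper's on precisely the point (smoothness of $j^{-1}$, hence openness of the image) that the paper treats most briefly.
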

\begin{proof}
We will prove that $j$ as defined in \eqref{eq:j} is an embedding and its image is an open subset of $\Graff(k,n)$. Let $\mathbf{A}+b \in\Graff(k,n)$.  First we observe that whenever 
\[
\spn(\mathbf{A}\cup \{ b +e_{n+1} \} ) = \spn(\mathbf{A}'\cup \{ b' +e_{n+1} \} ),
\]
we have that $b' + e_{n+1}\in \spn(\mathbf{A}\cup \{ b +e_{n+1} \} )$. This implies that $b'- b \in \mathbf{A}$ since $ \mathbf{A}$ is a subspace of $\mathbb{R}^n$. So $\mathbf{A} = \mathbf{A}'$ and therefore $\mathbf{A} + b = \mathbf{A}' + b'$, i.e., the map $j$ is injective.

The smoothness of $j$ can be seen by putting orthogonal affine coordinates on $\Graff(k,n) $ and the usual choice of coordinates on $\Gr(k+1,n+1)$ where every $\mathbf{B} \in \Gr(k+1,n+1)$ is represented by an orthonormal basis $B \in \mathbb{R}^{(n+1) \times (k+1)}$ of $\mathbf{B}$. Let $\mathbf{A} + b \in \Graff(k,n)$ have orthogonal affine coordinates $[A,b_0]\in \mathbb{R}^{n \times (k+1)}$ where $A^\tp A = I$, $A^\tp b_0 = 0$, and let $a_1,\dots, a_k\in \mathbb{R}^n$ be the column vectors of $A$. By definition, $j(\mathbf{A} + b)$ is spanned by the orthonormal basis
\[
\begin{bmatrix}
a_1\\
0
\end{bmatrix},\dots, 
\begin{bmatrix}
a_k\\
0
\end{bmatrix},
\begin{bmatrix}
b_0\\
1
\end{bmatrix}.
\]
So with our choice of coordinates on $\Graff(k,n) $ and $\Gr(k+1,n+1)$, the map $j$ takes the form
\[
j([A,b_0]) = \begin{bmatrix}
A & b_0\\
0 & 1
\end{bmatrix},
\]
which is clearly smooth.

Since $n\ge 2$, $j\bigl(\Graff(k,n)\bigr)$ is an open submanifold of $\Gr(k+1,n+1)$. For the complement of $j\bigl(  \Graff(k,n) \bigr)$ in $\Gr(k+1,n+1)$, note that a $(k+1)$-dimensional linear subspace $\mathbf{B}$ of $\mathbb{R}^{n+1}$ is an image of some $\mathbf{A} + b\in \Graff(k,n)$ under the map $j$ if and only if the $(n+1)$th coordinate of any vector in $\mathbf{B}$ is nonzero. So  the complement  consists of all $\mathbf{B}$  contained in the subspace $\mathbb{R}^n$ of $\mathbb{R}^{n+1}$, i.e., it is diffeomorphic to the Grassmannian $\Gr(k+1,n)$, which has dimension $(k+1)(n-k-1) = (k+1)(n-k) - (k+1)$, thus codimension $k+1 \ge 2$, and therefore of measure zero.
\end{proof}
In the proof we identified $\mathbb{R}^n$ with the subset $\{ (x_1,\dots,x_n,0)^\tp  \in \mathbb{R}^{n+1} : x_1,\dots,x_n \in \mathbb{R}\}$ to obtain a complete flag
$\{0\} \subset \mathbb{R}^1 \subset \mathbb{R}^2 \subset \dots \subset \mathbb{R}^n \subset \mathbb{R}^{n+1} \subset \cdots$.
Given this, our choice of $e_{n+1}$ in the embedding $j$ in \eqref{eq:j} is the most natural one. Henceforth we will often  identify $\Graff(k,n)$ with its embedded image $j\bigl(\Graff(k,n)\bigr)$. Whenever we speak of  $\Graff(k,n)$ as if it is a subset of $\Gr(k+1,n+1)$, we are implicitly assuming this identification. In this regard, we may view $\Gr(k+1,n+1)$ as a compactification of the noncompact manifold $\Graff(k,n)$.

From Theorem~\ref{thm:alg}, we derive a few other observations that will be of importance for our optimization algorithms. From the perspective of optimization, the most important feature of the embedding $j$ is that it does not increase dimension; since the computational costs of optimization algorithms inevitably depend on the dimension of the ambient space, it is ideal in this regard.
\begin{corollary}\label{prop:smooth}
$\Graff(k,n)$ is a Riemannian manifold with the canonical metric induced from that of $\Gr(k+1,n+1)$. In addition,
\begin{enumerate}[\upshape (i)]
\item\label{it:dim} the dimension of the ambient manifold $\Gr(k+1,n+1)$ is exactly the same as $\Graff(k,n)$, i.e.,
\[
\dim \Graff(k,n) = (n-k)(k+1) = \dim \Gr(k+1,n+1);
\]

\item\label{it:gd=} the geodesic distance between two points $\mathbf{A} + b$ and $\mathbf{B} + c$ in $\Graff(k,n)$ is equal to that between $j(\mathbf{A} + b)$ and $j(\mathbf{B} + c)$ in $\Gr(k+1,n+1)$;

\item\label{it:ext}  if $f : \Graff(k,n) \to \mathbb{R}$ is a continuous function that can be extended to $\widetilde{f} : \Gr(k+1,n+1)\to \mathbb{R}$, then the minimizer and maximizer of $\widetilde{f}$ are almost always attained in $\Graff(k,n)$.
\end{enumerate}
\end{corollary}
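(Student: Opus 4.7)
The plan is to derive all three claims from Theorem~\ref{thm:alg}, which identifies $\Graff(k,n)$ with an open submanifold of $\Gr(k+1,n+1)$ whose complement $C \coloneqq \Gr(k+1,n+1) \setminus j\bigl(\Graff(k,n)\bigr)$ is diffeomorphic to $\Gr(k+1,n)$ and has codimension $k+1 \ge 2$. That $\Graff(k,n)$ inherits a Riemannian structure is automatic for an open submanifold, and under the restricted metric the Levi--Civita connection is also the restriction of that of $\Gr(k+1,n+1)$, so geodesics of the ambient manifold that happen to lie in the open subset are still geodesics of $\Graff(k,n)$. Part~(i) is then immediate: an open submanifold shares the dimension of its ambient manifold, and a direct calculation gives $\dim \Gr(k+1,n+1) = (k+1)\bigl((n+1)-(k+1)\bigr) = (n-k)(k+1)$.

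For part~(ii), write $d_{\mathrm{Graff}}$ and $d_{\mathrm{Gr}}$ for the geodesic distances on $\Graff(k,n)$ and $\Gr(k+1,n+1)$ respectively, and fix $\mathbf{P},\mathbf{Q} \in \Graff(k,n)$. The inequality $d_{\mathrm{Gr}}\bigl(j(\mathbf{P}),j(\mathbf{Q})\bigr) \le d_{\mathrm{Graff}}(\mathbf{P},\mathbf{Q})$ is immediate because every smooth path in the open submanifold is also a smooth path in the ambient manifold of the same length. For the reverse inequality I would exploit the codimension bound: a minimizing geodesic in $\Gr(k+1,n+1)$ from $j(\mathbf{P})$ to $j(\mathbf{Q})$ need not avoid $C$, but since $C$ has codimension at least two it does not locally separate $\Gr(k+1,n+1)$, and a standard transversality or tubular-neighborhood argument lets us perturb any near-minimizing path to one lying entirely in $\Graff(k,n)$ whose length is within $\epsilon$ of the minimum. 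Letting $\epsilon \to 0$ and taking the infimum over paths in $\Graff(k,n)$ yields equality.

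For part~(iii), compactness of $\Gr(k+1,n+1)$ guarantees that any continuous extension $\widetilde{f}$ attains its minimum and maximum on the ambient Grassmannian. Since $C$ has measure zero, the extremizers generically avoid $C$; a formalisation is that for $\widetilde{f}$ in an open dense subset of $C^0\bigl(\Gr(k+1,n+1)\bigr)$ (e.g., Morse functions, or equivalently $\widetilde{f}$ after an arbitrarily small additive perturbation) the extremizers are finitely many isolated points lying almost surely in the full-measure open set $j\bigl(\Graff(k,n)\bigr)$.

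The only step requiring real care is the codimension-$2$ approximation in part~(ii): the intuition that a subvariety of codimension at least two imposes no obstruction to path-shortening is standard, but a rigorous proof rests on either a transversality argument for a generic perturbation of the path or an explicit construction using a tubular neighborhood of $C$. This is where I would concentrate the bulk of the proof effort, the other two parts being essentially a dimension count and a soft genericity statement respectively.
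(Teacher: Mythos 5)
Your proposal is correct and takes essentially the same route as the paper: all three parts are reduced to Theorem~\ref{thm:alg}, with (i) an open-submanifold dimension count, (ii) handled by the codimension-at-least-two property of the complement (the paper simply cites the Transversality Theorem from Hirsch where you sketch the perturbation/tubular-neighborhood argument in more detail), and (iii) deduced from compactness of $\Gr(k+1,n+1)$ together with the complement having measure zero. The only difference is one of explicitness, not of substance.
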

\begin{proof}
It is a basic fact in differential geometry \cite[Chapter~8]{Lee2003} that every open subset of a Riemannian manifold is also a Riemannian manifold with the induced metric. Explicit expressions for the Riemannian metric on $\Graff(k,n)$ can be found in Propositions~\ref{prop:basicstief}\eqref{it:metric} and \ref{prop:basicproj}\eqref{Pmetric}.  \eqref{it:dim} follows from Theorem~\ref{thm:alg}, i.e., $\Graff(k,n)$ is an open submanifold of $\Gr(k+1,n+1)$. Since the codimension of the complement of $\Graff(k,n)$ in $\Gr(k+1,n+1)$ is at least two, \eqref{it:gd=} follows from the Transversality Theorem in differential topology \cite{Hirsch}. For \eqref{it:ext}, note that $\widetilde{f}$ always attains its  minimizer and maximizer since $\Gr(k+1,n+1)$ is compact; that the minimizer and maximizer lie in $j\bigl(\Graff(k,n)\bigr) $ with probability one is just a consequence of the fact that its complement has null measure.
\end{proof}
Note that \eqref{it:gd=} and \eqref{it:ext}  rely  on Theorem~\ref{thm:alg} and does not hold in general for other embedded manifolds. For example, if $B$ is the solid unit ball in $\mathbb{R}^3$ and $M$ is the complement of $B$ in $\mathbb{R}^3$, then \eqref{it:gd=} and \eqref{it:ext} obviously fail to hold for $M$. 

It is sometimes desirable to represent elements of $\Gr(k,n)$ as actual matrices instead of equivalence classes of matrices. 
The Grassmannian has a well-known representation  \cite[Example~1.2.20]{Nicolaescu}  as  rank-$k$ orthogonal projection\footnote{A projection matrix satisfies $P^2 = P$ and an orthogonal projection matrix is in addition symmetric, i.e., $P^\tp  = P$. An orthogonal projection matrix $P$ is not an orthogonal matrix unless $P=I$.} matrices, or, equivalently,  trace-$k$ idempotent symmetric matrices:
\begin{equation}\label{eq:grproj}
\Gr(k,n) \cong \{P \in \mathbb{R}^{n \times n}: P^\tp  = P^{2} = P, \; \tr(P) = k\}.
\end{equation}
Note that $\rank(P) = \tr(P)$ for an orthogonal projection matrix $P$.
A straightforward affine analogue of \eqref{eq:grproj} for $\Graff(k,n)$ is simply
\begin{equation}\label{eq:graffmatrix}
\Graff(k,n) \cong \{[P , b] \in \mathbb{R}^{n \times (n+1)}:  P^\tp =P^2=P, \; \tr(P)=k,\; Pb = 0\},
\end{equation}
where $\mathbf{A}+b \in \Graff(k,n)$ with orthogonal affine coordinates $[A,b_0] \in \mathbb{R}^{n \times (k+1)}$ is represented as the matrix\footnote{If $A$ is an orthonormal basis for the subspace $\mathbf{A}$, then $AA^\tp $ is the orthogonal projection onto $\mathbf{A}$.} $[AA^\tp , b_0] \in \mathbb{R}^{n \times (n+1)}$. We  call this  the matrix of \emph{projection affine coordinates} for $\mathbf{A} + b$.

There are three particularly useful systems of matrix coordinates on the Grassmannian: a point on $\Gr(k,n)$ can be represented as (i) an equivalence class of matrices  $A \in \mathbb{R}^{n \times k}$ with linearly independent columns such that $A \sim AS$ for any $S \in \GL(k)$, the group of invertible $k\times k$ matrices; (ii) an equivalence class of matrices $A \in \V(k,n)$ with orthonormal columns such that $A \sim AQ$ for any $Q \in \O(k)$; (iii) a projection matrix $P \in \mathbb{R}^{n \times n}$ satisfying $P^2 =P^\tp  = P$ and $\tr(P) = k$. These correspond to representing  $\mathbf{A}$ by (i) bases of $\mathbf{A}$, (ii) orthonormal bases  of $\mathbf{A}$, (iii) an orthogonal projection onto $\mathbf{A}$. The affine coordinates, orthogonal affine coordinates, and projection affine coordinates introduced in this section are obvious analogues of (i), (ii), and (iii) respectively.

However, these relatively simplistic global coordinates are inadequate in computations. As we will see in Sections~\ref{sec:objects} and \ref{sec:optim}, explicit representations of tangent space vectors and geodesics, effective computations of exponential maps, parallel transports, gradients, and Hessians, require more sophisticated systems of global matrix coordinates. In Section~\ref{sec:global} we will introduce two of these.

Nevertheless, the simpler coordinate systems in this section serve a valuable role ---  they come in handy in proofs, where the more complicated systems of coordinates in Section~\ref{sec:global} can be unnecessarily cumbersome. The bottom line is that different coordinates are good for different purposes\footnote{This is also the case for Grassmannian: orthonormal or projection matrix coordinates may be invaluable for computations as in \cite{EAS, newgrass} but they obscure mathematical properties evident in, say, Pl\"ucker coordinates \cite[Chapter~14]{MS}.} and having several choices makes the affine Grassmannian a versatile platform in applications.

\section{Matrix coordinates for the affine Grassmannian}\label{sec:global}

One reason for the wide applicability of the Grassmannian is the existence of  several excellent choices of global coordinates in terms of matrices, allowing subspaces to be represented as matrices and thereby facilitating the use of a vast range of algorithms in numerical linear algebra \cite{AMSbook, AMS, AMSV, EAS}. Here we will introduce two systems of global coordinates, representing a point on $\Graff(n,k)$ as an $(n+1) \times (k+1)$ orthonormal matrix or as an $(n+1) \times (n+1)$ projection matrix respectively.

For  an affine subspace $\mathbf{A} + b \in\Graff(k,n)$, its orthogonal affine coordinates are $[A,b_0] \in \V(k,n) \times \mathbb{R}^n$ where $A^\tp  b_0 = 0$, i.e., $b_0$ is orthogonal to the columns of $A$. However as $b_0$ is in general not of unit norm, we may not regard $[A,b_0]$ as an element of $\V(n,k+1)$. With this in mind, we introduce the notion of Stiefel coordinates, which is the most suitable system of coordinates for computations.
\begin{definition}\label{def:stief}
Let $\mathbf{A} + b \in \Graff(k,n)$  and $[A,b_0] \in \mathbb{R}^{n \times (k+1)}$ be its orthogonal affine coordinates, i.e., $A^\tp  A = I$ and $A^\tp  b_0 =0$. The matrix of \emph{Stiefel coordinates} for $\mathbf{A} + b$ is the $(n+1) \times (k+1)$ matrix with orthonormal columns,
\[
Y_{\mathbf{A} + b} \coloneqq
\begin{bmatrix}
A& b_0/\sqrt{1+\lVert b_0\rVert^2}\\
0& 1/\sqrt{1+\lVert b_0\rVert^2}
\end{bmatrix} \in \V(n +1, k+1).
\]
\end{definition}
Two orthogonal affine coordinates $[A,b_0], [A',b_0']$ of $\mathbf{A}+b$ give two corresponding matrices of Stiefel coordinates $Y_{\mathbf{A} + b} $, $Y_{\mathbf{A} + b}'$. By the remark after our definition of orthogonal affine coordinates, $A = A'Q'$ for some $Q' \in \O(k)$ and $b_0 = b_0'$. Hence
\begin{equation}\label{eq:otsc}
Y_{\mathbf{A} + b} =
\begin{bmatrix}
A& b_0/\sqrt{1+\lVert b_0\rVert^2}\\
0& 1/\sqrt{1+\lVert b_0\rVert^2}
\end{bmatrix}= 
\begin{bmatrix}
A' & b_0'/\sqrt{1+\lVert b_0'\rVert^2}\\
0& 1/\sqrt{1+\lVert b_0'\rVert^2}
\end{bmatrix}
\begin{bmatrix}
Q' & 0 \\
0 & 1
\end{bmatrix} =
Y_{\mathbf{A} + b}' Q
\end{equation}
where $Q \coloneqq \begin{bsmallmatrix} Q' & 0 \\ 0 &1\end{bsmallmatrix}\in \O(k+1)$. Hence two different matrices of Stiefel coordinates for the same affine space differ by an orthogonal transformation.

\begin{proposition} \label{prop:graffquot}
Consider the equivalence class of matrices given by
\[
\begin{bmatrix} A& b \\ 0&  \gamma \end{bmatrix} \cdot \O(k+1) \coloneqq \biggl\{ \begin{bmatrix} A& b \\ 0&  \gamma \end{bmatrix}Q  \in \mathbb{R}^{(n+1) \times (k+1)} :  Q \in \O(k+1) \biggr\}.
\]
The affine Grassmannian may be represented as a set of equivalence classes of $(n+1) \times (k+1)$ matrices with orthonormal columns,
\begin{align}
\Graff(k,n) &\cong \biggl\{ 
\begin{bmatrix} A& b \\ 0&  \gamma \end{bmatrix} \cdot \O(k+1)  :  \begin{bmatrix} A& b \\ 0&  \gamma \end{bmatrix} \in \V(k+1, n+1)\biggr\} \label{eq:X1} \\
&\subseteq \V(k+1, n+1)/\O(k+1) = \Gr(k+1,n+1). \label{eq:X2}
\end{align}
An affine subspace $\mathbf{A} + b \in \Graff(k,n)$  is represented by the equivalence class $Y_{\mathbf{A} + b} \cdot \O(k+1)$ corresponding to its matrix of Stiefel coordinates.
\end{proposition}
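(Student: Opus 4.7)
The plan is to leverage the embedding $j\colon \Graff(k,n) \hookrightarrow \Gr(k+1, n+1)$ from Theorem~\ref{thm:alg}, the standard identification $\Gr(k+1,n+1) \cong \V(k+1,n+1)/\O(k+1)$ of \eqref{eq:grass}, and the Stiefel coordinate construction of Definition~\ref{def:stief}. First I would verify that $Y_{\mathbf{A}+b} \in \V(k+1, n+1)$ by a direct computation from the two defining identities $A^\tp A = I$ and $A^\tp b_0 = 0$ of orthogonal affine coordinates: the first $k$ columns are pairwise orthonormal, the last column has squared norm $(\lVert b_0\rVert^2 + 1)/(1+\lVert b_0\rVert^2) = 1$, and the inner product of the last column with the $i$th of the first $k$ collapses to $a_i^\tp b_0/\sqrt{1+\lVert b_0\rVert^2} = 0$.

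Next I would show that the column span of $Y_{\mathbf{A}+b}$ is precisely $j(\mathbf{A}+b)$. Rescaling the last column by $\sqrt{1+\lVert b_0\rVert^2}$ does not affect the span and produces the basis $\begin{bmatrix} a_1 \\ 0 \end{bmatrix},\dots,\begin{bmatrix} a_k \\ 0 \end{bmatrix},\begin{bmatrix} b_0 \\ 1 \end{bmatrix}$ of $\spn(\mathbf{A} \cup \{b_0 + e_{n+1}\})$ exhibited in the proof of Theorem~\ref{thm:alg}; since $b - b_0 \in \mathbf{A}$ this set equals $j(\mathbf{A}+b)$. Well-definedness of the assignment $\mathbf{A}+b \mapsto Y_{\mathbf{A}+b}\cdot \O(k+1)$ (independence of the chosen orthogonal affine coordinates) is already the content of \eqref{eq:otsc}, and injectivity is immediate: two matching equivalence classes force the common column span to equal both $j(\mathbf{A}+b)$ and $j(\mathbf{A}'+b')$, so $\mathbf{A}+b = \mathbf{A}'+b'$ by injectivity of $j$.

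For the reverse direction in \eqref{eq:X1}, I would take an arbitrary representative $Y = \begin{bmatrix} A & b \\ 0 & \gamma \end{bmatrix} \in \V(k+1,n+1)$ of the prescribed block form and extract, from orthonormality of its columns, the relations $A^\tp A = I$, $A^\tp b = 0$, and $\lVert b\rVert^2 + \gamma^2 = 1$. Right-multiplying by $\diag(I_k, \sgn\gamma) \in \O(k+1)$ when necessary arranges $\gamma \ge 0$; setting $b_0 \coloneqq b/\gamma$ for $\gamma > 0$ then gives $\gamma = 1/\sqrt{1+\lVert b_0\rVert^2}$ and identifies $Y$ with $Y_{\spn(A)+b_0}$, so the equivalence class comes from a unique element of $\Graff(k,n)$ via the map just constructed. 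The one subtlety I expect is the boundary case $\gamma = 0$: here the $(k+1)$-plane lies in $\mathbb{R}^n \times \{0\}$ and belongs to the measure-zero complement $\Gr(k+1,n) \subset \Gr(k+1,n+1)$ already excluded in Theorem~\ref{thm:alg}, so excluding these (as is implicit in treating $\Graff(k,n)$ as an open submanifold) yields the asserted bijection in \eqref{eq:X1}--\eqref{eq:X2}.
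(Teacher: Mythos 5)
Your proof is correct and follows essentially the same route as the paper's: the paper's own proof is a one-line reduction to the embedding $j$ of Theorem~\ref{thm:alg} together with the identification $\Gr(k+1,n+1)\cong \V(k+1,n+1)/\O(k+1)$, which is exactly the skeleton you flesh out (verifying $Y_{\mathbf{A}+b}\in\V(k+1,n+1)$, matching its column span with $j(\mathbf{A}+b)$, and inverting via $b_0 = b/\gamma$). One point in your favor: your treatment of the $\gamma=0$ case is a genuine catch rather than a mere technicality --- as literally written, the right-hand side of \eqref{eq:X1} contains classes with $\gamma=0$ (indeed, by the Householder argument in the proof of Lemma~\ref{lem:interchange}, \emph{every} $(k+1)$-plane in $\mathbb{R}^{n+1}$ admits a representative of that block form), and those classes correspond to the complement $\Gr(k+1,n)\subset\Gr(k+1,n+1)$ rather than to affine subspaces, so the asserted bijection holds only once they are excluded; the paper's terse proof does not address this.
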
 
\begin{proof}
The set of equivalence classes on the \textsc{rhs} of \eqref{eq:X1} is the set $X$ in Theorem~\ref{thm:alg}(iii) if $\Gr(k+1,n+1)$ is regarded as the homogeneous space in \eqref{eq:X2}.
\end{proof}

The following lemma is easy to see from the definition of Stiefel coordinates and our discussion above. It will be useful for the optimization algorithms in Section~\ref{sec:optim}, allowing us to check \emph{feasibility}, i.e., whether a point represented as an $(n+1) \times (k+1)$ matrix is in the feasible set $j\bigl(\Graff(k,n)\bigr)$.
\begin{lemma}\label{lem:checkgraffstief}
\begin{enumerate}[\upshape (i)]
\item Any matrix of the form $\begin{bsmallmatrix} A& b \\ 0&  \gamma \end{bsmallmatrix} \in \V(k+1, n+1)$, i.e.,
\[
A^\tp A = I, \qquad A^\tp  b = 0, \qquad \lVert b\rVert^2 + \gamma^2 = 1,
\]
is the matrix of Stiefel coordinates for some $\mathbf{A} + b \in \Graff(k,n)$.
\item\label{two} Two matrices of Stiefel coordinates $\begin{bsmallmatrix} A& b \\ 0&  \gamma \end{bsmallmatrix}, \begin{bsmallmatrix} A'& b' \\ 0&  \gamma' \end{bsmallmatrix} \in \V(k+1, n+1)$  represent the same affine subspace iff there exists $\begin{bsmallmatrix} Q' & 0 \\ 0 &1\end{bsmallmatrix} \in \O(k+1)$ such that
\[
\begin{bmatrix} A& b \\ 0&  \gamma \end{bmatrix} = 
\begin{bmatrix} A'& b' \\ 0&  \gamma' \end{bmatrix}  
\begin{bmatrix} Q'& 0 \\ 0&  1\end{bmatrix}.
\]
\item If $\begin{bsmallmatrix} A& b \\ 0&  \gamma \end{bsmallmatrix} \in \V(k+1, n+1)$ is a matrix of Stiefel coordinates for $\mathbf{A} + b$, then every other matrix of Stiefel coordinates for $\mathbf{A} + b$ belongs to the equivalence class  $\begin{bsmallmatrix} A& b \\ 0&  \gamma \end{bsmallmatrix} \cdot \O(k+1)$, but not every matrix in  $\begin{bsmallmatrix} A& b \\ 0&  \gamma \end{bsmallmatrix} \cdot \O(k+1)$ is a matrix of Stiefel coordinates for $\mathbf{A} + b$.
\end{enumerate}
\end{lemma}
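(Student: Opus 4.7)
The plan is to verify all three items directly from Definition~\ref{def:stief} and the transformation law \eqref{eq:otsc}, using the fact that the full $\O(k+1)$-orbit from Proposition~\ref{prop:graffquot} is strictly larger than the subgroup $\O(k)\times\{1\} \subset \O(k+1)$ that preserves the Stiefel normalization. For (i), given a matrix of the stated form, I would first observe that $\gamma = 0$ is excluded: the column span would then lie in the hyperplane $\mathbb{R}^n \times \{0\} \subset \mathbb{R}^{n+1}$, which by the proof of Theorem~\ref{thm:alg} is the complement of $j\bigl(\Graff(k,n)\bigr)$. After right-multiplication by $\diag(I, \sgn(\gamma)) \in \O(k+1)$ if necessary I may assume $\gamma > 0$. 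Setting $\mathbf{A} \coloneqq \spn(A)$ and $b_0 \coloneqq b/\gamma$, the hypotheses $A^\tp A = I$ and $A^\tp b = 0$ translate to valid orthogonal affine coordinates $[A, b_0]$ for $\mathbf{A} + b_0$; the normalization $\lVert b\rVert^2 + \gamma^2 = 1$ then rearranges to $\gamma = 1/\sqrt{1 + \lVert b_0\rVert^2}$, so substituting into Definition~\ref{def:stief} recovers the given matrix verbatim.

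For (ii), the forward direction is essentially the calculation in \eqref{eq:otsc}: if the two matrices represent the same affine subspace, the underlying orthogonal affine coordinates satisfy $b_0' = b_0$ and $A = A'Q'$ for some $Q' \in \O(k)$, and substituting into Definition~\ref{def:stief} produces the displayed block-diagonal transformation, the last diagonal entry being forced to $1$ because $\gamma = 1/\sqrt{1+\lVert b_0\rVert^2}$ depends only on $b_0$. For the reverse, I would read the matrix identity column by column: equality of the last columns forces $b = b'$ and $\gamma = \gamma'$, hence $b_0 = b/\gamma = b'/\gamma' = b_0'$; equality of the first $k$ columns yields $A = A'Q'$ with $Q' \in \O(k)$, so $\spn(A) = \spn(A')$, and the two affine subspaces coincide.

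For (iii) the first half is immediate from (ii) together with the inclusion $\O(k)\times\{1\} \subsetneq \O(k+1)$. For the second half, I would exhibit one explicit $Q \in \O(k+1) \setminus (\O(k)\times\{1\})$ whose right action destroys the Stiefel block form --- a Givens rotation in the $(k,k+1)$-plane with a generic angle works, since it mixes the nonzero bottom-row entry $\gamma$ into a previously zero bottom-left position, violating the block pattern demanded by Definition~\ref{def:stief}. The main (and only mildly delicate) obstacle is to rule out that the perturbed matrix might accidentally be a Stiefel matrix for some \emph{different} affine subspace; this is dispatched by noting that any bona fide Stiefel matrix has a bottom row of the form $(0,\dots,0,\gamma)$ with at most one nonzero entry, a condition the generic Givens-rotated matrix fails by construction.
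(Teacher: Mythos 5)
The paper never actually proves this lemma --- it is dismissed with ``easy to see from the definition of Stiefel coordinates and our discussion above'' --- so your write-up supplies the routine verification the authors omit, and it follows exactly the route they intend: everything is read off from Definition~\ref{def:stief} and the transformation law \eqref{eq:otsc}. Your arguments for (ii) and (iii) are correct. In particular, the reverse direction of (ii) (compare last columns to get $b=b'$ and $\gamma=\gamma'$, hence $b_0=b/\gamma=b'/\gamma'=b_0'$, and first blocks to get $\spn(A)=\spn(A')$) and the witness for the second half of (iii) (a generic rotation in the plane of the last two columns puts a second nonzero entry into the bottom row, a pattern no matrix of Stiefel coordinates can have, so the rotated matrix is not a Stiefel coordinate matrix for \emph{any} affine subspace, let alone $\mathbf{A}+b$) are exactly right.

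The one place you wobble is part (i), and the wobble traces back to an imprecision in the lemma itself. The hypotheses $A^\tp A=I$, $A^\tp b=0$, $\lVert b\rVert^2+\gamma^2=1$ do \emph{not} exclude $\gamma\le 0$, yet Definition~\ref{def:stief} forces the $(n+1,k+1)$ entry of any matrix of Stiefel coordinates to be $1/\sqrt{1+\lVert b_0\rVert^2}>0$. For $\gamma=0$ the given matrix has all its columns in $\mathbb{R}^n\times\{0\}$ and is the Stiefel coordinate matrix of nothing; your remark that this case ``is excluded'' is really the observation that the statement \emph{fails} there, not that the case cannot arise under the stated hypotheses. For $\gamma<0$ your right-multiplication by $\diag(I,\sgn\gamma)$ produces a \emph{different} matrix, and it is that modified matrix, not the one you were handed, that your argument exhibits as a matrix of Stiefel coordinates. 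The clean repair is to state explicitly that (i) holds precisely when $\gamma>0$ --- consistent with Lemma~\ref{lem:checkgraffproj}, which does impose $\gamma\ne 0$ --- after which your computation ($b_0\coloneqq b/\gamma$, so that $[A,b_0]$ are orthogonal affine coordinates and $\gamma=1/\sqrt{1+\lVert b_0\rVert^2}$) recovers the given matrix verbatim and finishes the proof.
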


The matrix of projection affine coordinates $[P , b] \in \mathbb{R}^{n \times (n+1)}$ in \eqref{eq:graffmatrix} is not an orthogonal projection matrix. With this in mind, we introduce the following notion.
\begin{definition}\label{def:proj}
Let $\mathbf{A} + b \in \Graff(k,n)$  and $[P , b] \in \mathbb{R}^{n \times (n+1)}$  be its projection affine coordinates. The matrix of \emph{projection coordinates} for $\mathbf{A} + b$ is the  orthogonal projection matrix
\[
P_{\mathbf{A} + b} \coloneqq \begin{bmatrix} P+bb^\tp /(\lVert b \rVert^{2}+1)& b/(\lVert b\rVert^{2}+1) \\ b^\tp /(\lVert b\rvert^{2}+1)& 1/(\lVert b\rVert^{2}+1) \end{bmatrix} \in \mathbb{R}^{(n+1) \times (n+1)}.
\]
Alternatively, in terms of orthogonal affine coordinates $[A,b_0] \in \mathbb{R}^{n \times (k+1)}$,
\[
P_{\mathbf{A} + b} = \begin{bmatrix} AA^\tp  +b_0b_0^\tp /(\lVert b_0 \rVert^{2}+1)& b_0/(\lVert b_0\rVert^{2}+1) \\ b_0^\tp /(\lVert b_0\rvert^{2}+1)& 1/(\lVert b_0\rVert^{2}+1)\end{bmatrix} \in \mathbb{R}^{(n+1) \times (n+1)}.
\]
\end{definition}
It is straightforward to verify that $P_{\mathbf{A} + b}$ is indeed an orthogonal projection matrix, i.e.,
$P_{\mathbf{A} + b}^2 =  P_{\mathbf{A} + b} = P_{\mathbf{A} + b}^\tp $.
Unlike Stiefel coordinates, projection coordinates of a given affine subspace are unique.
As in Proposition~\ref{prop:graffquot}, the next result gives a concrete description of  the set $X = j \bigl(\Graff(k,n)\bigr)$ in Theorem~\ref{thm:alg}(iii), but  in terms of projection coordinates. With this description, $\Graff(k,n)$ may be regarded as a subvariety of $\mathbb{R}^{(n+1) \times (n+1)}$.
\begin{proposition}\label{prop:graffproj}
The affine Grassmannian may be represented as  a set of $(n+1) \times (n+1)$ orthogonal projection matrices,
\begin{multline}\label{eq:graffproj}
\Graff(k,n) \cong \biggl\{\begin{bmatrix} P+bb^\tp /(\lVert b \rVert^{2}+1)& b/(\lVert b\rVert^{2}+1) \\ b^\tp /(\lVert b\rvert^{2}+1)& 1/(\lVert b\rVert^{2}+1) \end{bmatrix} \in \mathbb{R}^{(n+1) \times (n+1)} :  \\
P \in \mathbb{R}^{n \times n}, \; P^\tp  = P^{2} = P, \; \tr(P) = k, \; Pb = 0 \biggr\}.
\end{multline}
An affine subspace $\mathbf{A} + b \in \Graff(k,n)$ is uniquely represented by its projection coordinates $P_{\mathbf{A} + b}$.
\end{proposition}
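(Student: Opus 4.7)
The plan is to reduce the proposition to the standard identification of $\Gr(k+1,n+1)$ with $(n+1)\times(n+1)$ orthogonal projection matrices of trace $k+1$ (cf.\ \eqref{eq:grproj}), pulled back through the embedding $j$ of Theorem~\ref{thm:alg}. The key computation, which I would perform first, is that for any matrix of Stiefel coordinates $Y_{\mathbf{A}+b}$ one has $P_{\mathbf{A}+b} = Y_{\mathbf{A}+b}Y_{\mathbf{A}+b}^\tp$. Setting $\gamma^2 = 1/(1+\lVert b_0\rVert^2)$ and carrying out the block multiplication
\[
\begin{bmatrix} A & \gamma b_0 \\ 0 & \gamma \end{bmatrix}
\begin{bmatrix} A^\tp & 0 \\ \gamma b_0^\tp & \gamma \end{bmatrix}
=
\begin{bmatrix} AA^\tp + \gamma^2 b_0 b_0^\tp & \gamma^2 b_0 \\ \gamma^2 b_0^\tp & \gamma^2 \end{bmatrix}
\]
reproduces exactly the expression in Definition~\ref{def:proj}. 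This simultaneously shows that $P_{\mathbf{A}+b}$ is a genuine orthogonal projection (symmetric, idempotent, with trace $k+1$), that its range is the subspace $j(\mathbf{A}+b)\in\Gr(k+1,n+1)$, and that it is independent of the orthonormal basis chosen for $\mathbf{A}$, resolving the ambiguity \eqref{eq:otsc} in Stiefel coordinates.

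Next I would establish the two inclusions in \eqref{eq:graffproj}. For the forward inclusion, take $\mathbf{A}+b \in \Graff(k,n)$ with orthogonal affine coordinates $[A,b_0]$, set $P \coloneqq AA^\tp$, and use $b_0$ as the displacement vector on the right-hand side; the relations $P^\tp = P^2 = P$, $\tr P = k$, and $Pb_0 = 0$ then follow immediately from $A^\tp A = I$ and $A^\tp b_0 = 0$, so $P_{\mathbf{A}+b}$ sits in the indicated set. For the reverse inclusion, given any block matrix of the stated form with $(P,b)$ satisfying the constraints, set $\mathbf{A} \coloneqq \im P$; since $P$ is an orthogonal projection with $\tr P = k$, we have $\mathbf{A} \in \Gr(k,n)$, and choosing any orthonormal basis $A$ of $\mathbf{A}$ gives $P = AA^\tp$. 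The condition $Pb = 0$ then says that $b$ is orthogonal to $\mathbf{A}$, so $[A,b]$ are orthogonal affine coordinates for $\mathbf{A}+b \in \Graff(k,n)$, and the given block matrix is precisely $P_{\mathbf{A}+b}$ by the computation of the previous paragraph.

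Finally, for uniqueness: since $P_{\mathbf{A}+b}$ is the orthogonal projection of $\mathbb{R}^{n+1}$ onto $j(\mathbf{A}+b)$, equality $P_{\mathbf{A}+b} = P_{\mathbf{A}'+b'}$ forces the two ranges to coincide, i.e., $j(\mathbf{A}+b) = j(\mathbf{A}'+b')$, and injectivity of $j$ from Theorem~\ref{thm:alg} gives $\mathbf{A}+b = \mathbf{A}'+b'$. The only nontrivial step in the whole argument is the block-matrix identity $P_{\mathbf{A}+b} = Y_{\mathbf{A}+b}Y_{\mathbf{A}+b}^\tp$; once that is in hand, the rest is formal, exploiting the fact that an orthogonal projection is uniquely determined by, and uniquely determines, its range.
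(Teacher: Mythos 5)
Your proof is correct and follows essentially the same route as the paper's: both establish the bijection by sending orthogonal affine coordinates $[A,b_0]$ to $P_{\mathbf{A}+b}$ and inverting via $P_{\mathbf{A}+b}\mapsto \im(P)+b_0$. The only organizational difference is that you derive well-definedness, idempotence, and injectivity from the identity $P_{\mathbf{A}+b}=Y_{\mathbf{A}+b}Y_{\mathbf{A}+b}^\tp$ together with the injectivity of $j$, whereas the paper verifies these facts directly in orthogonal affine coordinates and defers that identity to Lemma~\ref{lem:interchange}(i).
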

\begin{proof}
Let $\mathbf{A} + b \in \Graff(k,n)$ have orthogonal affine coordinates $[A,b_0]$. Since $P = AA^\tp  \in \mathbb{R}^{n \times n}$ is an orthogonal projection matrix that satisfies $Pb_0 = 0$, the map $\mathbf{A} + b \mapsto P_{\mathbf{A} + b}$  takes $\Graff(k,n)$ onto the set of matrices on the \textsc{rhs} of \eqref{eq:graffproj} with inverse given by $ P_{\mathbf{A} + b} \mapsto \im(P) + b_0$.
\end{proof}

The next lemma allows feasibility checking in projection coordinates.
\begin{lemma}\label{lem:checkgraffproj}
An orthogonal projection matrix $\begin{bsmallmatrix} S & d \\ d^\tp  & \gamma \end{bsmallmatrix} \in \mathbb{R}^{(n+1) \times (n+1)}$ is the matrix of projection coordinates for some affine subspace in $\mathbb{R}^n$ iff
\begin{enumerate}[\upshape (i)]
\item $\gamma \ne 0$;
\item $S - \gamma^{-1} dd^\tp  \in \mathbb{R}^{n \times n}$ is an orthogonal projection matrix;
\item $S d = 0$.
\end{enumerate}
In addition, $\begin{bsmallmatrix} S & d \\ d^\tp  & \gamma \end{bsmallmatrix}  \in \mathbb{R}^{(n+1) \times (n+1)}$ is the matrix of projection coordinates for $\mathbf{A} + b \in \Graff(k,n)$ iff $S - \gamma^{-1} dd^\tp  = AA^\tp $ and $\gamma^{-1} d = b_0$ where $[A,b_0] \in \mathbb{R}^{n \times (k+1)}$ is $\mathbf{A} + b$ in orthogonal affine coordinates.
\end{lemma}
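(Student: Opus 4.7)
The plan is to match the block structure of $P = \begin{bsmallmatrix} S & d \\ d^\tp & \gamma \end{bsmallmatrix}$ directly against the formula for $P_{\mathbf{A}+b}$ in Definition~\ref{def:proj}. For the forward implication, I would suppose $P = P_{\mathbf{A}+b}$ with orthogonal affine coordinates $[A, b_0]$; reading off the blocks yields $\gamma = 1/(\lVert b_0 \rVert^2+1) > 0$, $d = \gamma b_0$, and $S = AA^\tp + \gamma^{-1} dd^\tp$. Condition (i) is then immediate, $S - \gamma^{-1} dd^\tp = AA^\tp$ is a rank-$k$ orthogonal projection yielding (ii), and condition (iii) follows from the orthogonality $A^\tp b_0 = 0$ that is built into the definition of orthogonal affine coordinates.

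For the reverse implication, I would reconstruct a candidate affine subspace from the raw data $(S, d, \gamma)$: set $b_0 := \gamma^{-1} d$, let $\widetilde{P} := S - \gamma^{-1} dd^\tp$ be the orthogonal projection supplied by (ii), let $\mathbf{A} := \im(\widetilde{P}) \in \Gr(k, n)$ with $k = \tr(\widetilde{P})$, and take any $A \in \V(k, n)$ with $AA^\tp = \widetilde{P}$. Two items remain to check: (a) that $[A, b_0]$ is a valid set of orthogonal affine coordinates, i.e., $A^\tp b_0 = 0$, which I would deduce by combining (iii) with $\widetilde{P} = AA^\tp$; and (b) that $P_{\mathbf{A}+b_0}$, as computed from Definition~\ref{def:proj}, actually equals $P$. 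For (b), the key algebraic identity is $\lVert d \rVert^2 = \gamma(1 - \gamma)$, which I would extract from the $(n+1,n+1)$ entry of $P^2 = P$; this rearranges to $\lVert b_0 \rVert^2 + 1 = \gamma^{-1}$, after which each block of $P_{\mathbf{A}+b_0}$ matches its counterpart in $P$ directly.

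I expect the main obstacle to be the bookkeeping in the reverse direction, specifically tying together the scalar constraint $\lVert d \rVert^2 = \gamma(1-\gamma)$ forced by $P^2 = P$ with the block computations, so that the normalization factors in $P_{\mathbf{A}+b_0}$ reproduce $\gamma$, $d$, and $S$ exactly. The final clause of the lemma, asserting $S - \gamma^{-1} dd^\tp = AA^\tp$ and $\gamma^{-1} d = b_0$, is then just the explicit form of the reconstruction performed above, and uniqueness of this decomposition is automatic once we know $P$ uniquely determines the affine subspace via Proposition~\ref{prop:graffproj}.
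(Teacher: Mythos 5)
Your overall strategy---matching blocks directly against Definition~\ref{def:proj} and extracting the scalar identity $\lVert d\rVert^2 = \gamma(1-\gamma)$ from the $(n+1,n+1)$ entry of $P^2=P$---is the natural one, and the paper in fact states this lemma without any proof, so there is no official argument to compare against. However, your handling of condition (iii) contains a genuine error. In the forward direction you assert that $Sd=0$ follows from $A^\tp b_0=0$. It does not: with $S = AA^\tp + b_0b_0^\tp/(1+\lVert b_0\rVert^2)$ and $d = b_0/(1+\lVert b_0\rVert^2)$ one computes
\[
Sd \;=\; \frac{AA^\tp b_0}{1+\lVert b_0\rVert^2} \;+\; \frac{\lVert b_0\rVert^2}{(1+\lVert b_0\rVert^2)^2}\, b_0 \;=\; \frac{\lVert b_0\rVert^2}{(1+\lVert b_0\rVert^2)^2}\, b_0,
\]
which is nonzero whenever $b_0\ne 0$. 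More structurally, the $(1,2)$ block of $P^2=P$ forces $Sd=(1-\gamma)d$ for \emph{any} orthogonal projection of this form; combined with $\lVert d\rVert^2=\gamma(1-\gamma)$, the literal condition $Sd=0$ forces $d=0$, so your reverse direction would only ever reconstruct affine subspaces through the origin. The condition that actually holds, and that you need in step (a) of your reconstruction to conclude $A^\tp b_0=0$, is $(S-\gamma^{-1}dd^\tp)d=0$, i.e., $\widetilde{P}d=0$ where $\widetilde{P}=S-\gamma^{-1}dd^\tp$; as stated, (iii) is simply not satisfied by the matrices in Definition~\ref{def:proj}, and a careful execution of your own plan would have revealed this.

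With that replacement the rest of your argument goes through exactly as you outline: $\lVert b_0\rVert^2=\gamma^{-2}\lVert d\rVert^2=\gamma^{-1}-1$ gives $1+\lVert b_0\rVert^2=\gamma^{-1}$, whence $b_0/(1+\lVert b_0\rVert^2)=\gamma b_0=d$ and $AA^\tp+b_0b_0^\tp/(1+\lVert b_0\rVert^2)=\widetilde{P}+\gamma^{-1}dd^\tp=S$, so $P_{\mathbf{A}+b_0}=P$. It is also worth noting that the corrected condition is redundant: from $Sd=(1-\gamma)d$ and $\lVert d\rVert^2=\gamma(1-\gamma)$ one gets $\widetilde{P}d=(1-\gamma)d-\gamma^{-1}\lVert d\rVert^2 d=0$ automatically, so (i) and (ii) alone already characterize the feasible projections. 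The final clause of the lemma and the uniqueness via Proposition~\ref{prop:graffproj} are handled correctly in your write-up.
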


The next lemma allows us to switch between Stiefel and projection  coordinates.
\begin{lemma}\label{lem:interchange}
\begin{enumerate}[\upshape (i)]
\item\label{StoP} If $Y_{\mathbf{A} + b} \in \V(k+1,n+1)$ is a matrix of Stiefel coordinates for $\mathbf{A} + b$, then the matrix of projection coordinates for $\mathbf{A} + b$ is given by
\[
P_{\mathbf{A} + b} =  Y_{\mathbf{A} + b} Y_{\mathbf{A} + b}^\tp  \in \mathbb{R}^{(n+1) \times (n+1)}.
\]

\item\label{PtoS} If $P_{\mathbf{A} + b}  \in \mathbb{R}^{(n+1) \times (n+1)}$ is the matrix of projection coordinates for $\mathbf{A} + b$, then a matrix of Stiefel coordinates for $\mathbf{A} + b$ is given by any $Y_{\mathbf{A} + b} \in \V(k+1,n+1)$ whose columns form an orthonormal eigenbasis for the $1$-eigenspace of $P_{\mathbf{A} + b}$.
\end{enumerate}
\end{lemma}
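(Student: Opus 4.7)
The plan is to prove part (i) by a direct block-matrix computation and then leverage it to obtain part (ii) via the spectral structure of orthogonal projections.

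For part (i), I would take $[A, b_0] \in \mathbb{R}^{n \times (k+1)}$ as the orthogonal affine coordinates of $\mathbf{A}+b$, so $A^\tp A = I$ and $A^\tp b_0 = 0$, and write $\alpha \coloneqq 1/\sqrt{1 + \lVert b_0 \rVert^2}$. Then Definition~\ref{def:stief} gives
\[
Y_{\mathbf{A}+b} = \begin{bmatrix} A & \alpha b_0 \\ 0 & \alpha \end{bmatrix},
\]
and a direct block multiplication yields
\[
Y_{\mathbf{A}+b} Y_{\mathbf{A}+b}^\tp = \begin{bmatrix} AA^\tp + \alpha^2 b_0 b_0^\tp & \alpha^2 b_0 \\ \alpha^2 b_0^\tp & \alpha^2 \end{bmatrix},
\]
which is precisely $P_{\mathbf{A}+b}$ as given in Definition~\ref{def:proj} (with $P = AA^\tp$). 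The only thing worth checking along the way is that $Y_{\mathbf{A}+b} Y_{\mathbf{A}+b}^\tp$ is independent of which orthogonal affine coordinates we picked; but any other choice differs by the right action of $\begin{bsmallmatrix} Q' & 0 \\ 0 & 1\end{bsmallmatrix} \in \O(k+1)$ by \eqref{eq:otsc}, which cancels in the product $YY^\tp$.

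For part (ii), I would combine part (i) with elementary spectral theory. Since $P_{\mathbf{A}+b} = Y_{\mathbf{A}+b} Y_{\mathbf{A}+b}^\tp$ is an orthogonal projection, its $1$-eigenspace coincides with its column space, which is the span of the columns of $Y_{\mathbf{A}+b}$; a quick trace computation (using $\tr(AA^\tp) = k$ and $\alpha^2(1 + \lVert b_0\rVert^2) = 1$) shows this space has dimension $k+1$. Now let $Y' \in \V(k+1, n+1)$ be any matrix whose columns form an orthonormal eigenbasis for this $1$-eigenspace. Since $Y'$ and $Y_{\mathbf{A}+b}$ are two orthonormal bases of the same $(k+1)$-dimensional subspace of $\mathbb{R}^{n+1}$, there exists $Q \in \O(k+1)$ with $Y' = Y_{\mathbf{A}+b} Q$. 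By Proposition~\ref{prop:graffquot} (which identifies $\mathbf{A}+b$ with the equivalence class $Y_{\mathbf{A}+b} \cdot \O(k+1)$), the matrix $Y'$ represents the same affine subspace $\mathbf{A}+b$, and so qualifies as a matrix of Stiefel coordinates for $\mathbf{A}+b$ in the sense of Lemma~\ref{lem:checkgraffstief}(iii).

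I do not anticipate a genuine obstacle: part (i) is block arithmetic and part (ii) is the observation that any two orthonormal bases of a subspace of $\mathbb{R}^{n+1}$ are related by an orthogonal transformation. The only subtle point is keeping straight that a ``matrix of Stiefel coordinates'' is identified up to the $\O(k+1)$-action of Proposition~\ref{prop:graffquot}, so that producing any orthonormal eigenbasis suffices.
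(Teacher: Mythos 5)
Your part (i) is exactly the paper's argument: a direct block computation of $Y_{\mathbf{A}+b}Y_{\mathbf{A}+b}^\tp$ together with the observation that the right $\O(k+1)$-action cancels in the product, so nothing to add there.

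Part (ii) has a genuine gap. You conclude that an arbitrary orthonormal eigenbasis $Y'$ of the $1$-eigenspace ``qualifies as a matrix of Stiefel coordinates'' because $Y' = Y_{\mathbf{A}+b}Q$ for some $Q \in \O(k+1)$. But Lemma~\ref{lem:checkgraffstief}(iii) says precisely that this is not enough: every matrix of Stiefel coordinates for $\mathbf{A}+b$ lies in the orbit $Y_{\mathbf{A}+b}\cdot\O(k+1)$, but \emph{not every matrix in that orbit is a matrix of Stiefel coordinates}. By Definition~\ref{def:stief} and Lemma~\ref{lem:checkgraffstief}(i), a matrix of Stiefel coordinates must have the block form $\begin{bsmallmatrix} A & b\\ 0 & \gamma\end{bsmallmatrix}$, i.e., its last row must be a multiple of $e_{k+1}^\tp$, and only right multiplication by $\begin{bsmallmatrix} Q' & 0\\ 0 & 1\end{bsmallmatrix}$ preserves that form (Lemma~\ref{lem:checkgraffstief}(ii)). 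A generic orthonormal eigenbasis $V_{k+1}$ of the $1$-eigenspace has a last row $v^\tp$ with several nonzero entries and is therefore not itself a matrix of Stiefel coordinates. The paper's proof supplies exactly the step you omit: take a Householder matrix $Q\in\O(k+1)$ with $Q^\tp v = \lVert v\rVert e_{k+1}$ and set $Y_{\mathbf{A}+b} = V_{k+1}Q$, which then has the required form (here $v\neq 0$ because the $(n+1,n+1)$ entry of $P_{\mathbf{A}+b}$ equals $1/(1+\lVert b\rVert^2)\neq 0$). Your correct observations --- that the $1$-eigenspace is the column space of $Y_{\mathbf{A}+b}$ and has dimension $k+1$ --- are the right starting point, but the conclusion needs this final rotation; as written, your argument proves only that $Y'$ represents the point $j(\mathbf{A}+b)$ of $\Gr(k+1,n+1)$, not that it is a matrix of Stiefel coordinates in the paper's sense.
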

\begin{proof}
\eqref{StoP} follows from the observation that for any  $Q \in \O(k+1)$,
{\footnotesize
\[
\biggl(\begin{bmatrix} A& b/\sqrt{\lVert b \rVert^{2}+1} \\ 0& 1/\sqrt{ \lVert b\rVert^{2}+1} \end{bmatrix} Q\biggr)\biggl(\begin{bmatrix} A& b/\sqrt{\lVert b \rVert^{2}+1} \\ 0& 1/\sqrt{ \lVert b\rVert^{2}+1} \end{bmatrix} Q\biggr)^\tp 
= \begin{bmatrix}AA^\tp + bb^\tp /\lVert b \rVert^{2}+1 & b/(\lVert b\rVert^{2}+1) \\ b^\tp /(\lVert b\rvert^{2}+1)& 1/(\lVert b\rVert^{2}+1) \end{bmatrix}.
\]}%
For \eqref{PtoS}, recall that the eigenvalues of an orthogonal projection matrix are $0$'s and $1$'s with multiplicities given by its nullity and rank respectively. Thus we have an eigenvalue decomposition of the form
$P_{\mathbf{A} + b}   = V \begin{bsmallmatrix}I_{k+1} & \\ & 0_{n -k} \end{bsmallmatrix} V^\tp  = V_{k+1} V_{k+1}^\tp $,
where the columns of $V_{k+1} \in \V(k+1,n+1)$ are the eigenvectors corresponding to the eigenvalue $1$. Let $v\in \mathbb{R}^{k+1}$ be the last row of $V_{k+1}$ and $Q \in \O(k+1)$ be a Householder matrix \cite{GVL} such that  $Q^\tp  v = \lVert v\rVert e_{k+1}$. Then $Y_{\mathbf{A} + b} = V_{k+1}Q$ has the form required in Lemma~\ref{lem:checkgraffstief}(i) for a matrix of Stiefel coordinates. 
\end{proof}

The above proof also shows that projection coordinates are unique even though Stiefel coordinates are not. In principle, they are interchangeable via Lemma~\ref{lem:interchange} but in reality, one form is usually more natural than the other for a specific use.

\section{Tangent space, exponential map, geodesic, parallel transport, gradient, and Hessian on the affine Grassmannian}\label{sec:objects}

The embedding of $\Graff(k,n)$ as an open smooth submanifold of $\Gr(k+1,n+1)$ by Theorem~\ref{thm:alg} and Corollary~\ref{prop:smooth} allows us to borrow the  Riemannian optimization framework on Grassmannians in \cite{AMSbook, AMS, AMSV, EAS} to develop optimization algorithms on the affine Grassmannian. 
We will present various geometric notions and algorithms on $\Graff(k,n)$ in terms of both Stiefel and projection coordinates. The higher dimensions required by projection coordinates generally makes them  less desirable than  Stiefel coordinates.

Propositions~\ref{prop:basicstief}, Theorem~\ref{thm3}, and Proposition~\ref{prop:basicproj} are respectively summaries of \cite{EAS} and \cite{newgrass}  adapted for the affine Grassmannian. We will only give a sketch of the proof, referring readers to the original sources for more details. 
\begin{proposition}\label{prop:basicstief}
The following are basic differential geometric notions on $\Graff(k,n)$ expressed in Stiefel coordinates.
\begin{enumerate}[\upshape (i)]
\item Tangent space: The tangent space at $\mathbf{A}+b \in \Graff(k,n)$ has representation
\[
\T_{\mathbf{A}+b}\bigl(\Graff(k,n)\bigr) = \bigl\{\Delta \in \mathbb{R}^{(n+1) \times (k+1)}: Y_{\mathbf{A}+b}^\tp \Delta = 0 \bigr\}.
\] 

\item\label{it:metric} Riemannian metric: The Riemannian metric $g$ on $\Graff(k,n)$ is given by
\[
g_{\mathbf{A} + b} (\Delta_1,\Delta_2) = \tr(\Delta_1^\tp \Delta_2)
\]
for $\Delta_1, \Delta_2 \in \T_{\mathbf{A}+b}\bigl(\Graff(k,n)\bigr) $, i.e., $Y_{\mathbf{A}+b}^\tp \Delta_i = 0$, $i =1,2$.

\item\label{Exp} Exponential map:
The geodesic with $Y(0) = Y_{\mathbf{A}+b}$ and $\dot{Y}(0) = H$ in $\Graff(k,n)$ is given by
\[
Y(t) = \begin{bmatrix} Y_{\mathbf{A}+b}V&U \end{bmatrix} \begin{bmatrix} \cos (t\Sigma)\\ \sin (t\Sigma) \end{bmatrix}V^\tp ,
\]
where $H = U\Sigma V^\tp $ is a condensed \textsc{svd}.

\item\label{Par} Parallel transport:
The parallel transport of $\Delta \in \T_{\mathbf{A}+b}\bigl(\Graff(k,n)\bigr)$ along the geodesic given by $H$ has expression
\[
\tau \Delta(t) = \biggl(\begin{bmatrix} Y_{\mathbf{A}+b}V&U \end{bmatrix} \begin{bmatrix} -\sin (t\Sigma)\\ \cos (t\Sigma) \end{bmatrix} U^\tp  + (I-UU^\tp ) \biggr) \Delta,
\]
where $H = U\Sigma V^\tp $ is a  condensed \textsc{svd}.

\item\label{Sgrad} Gradient: 
Let $f: \mathbb{R}^{(n+1) \times (k+1)} \rightarrow \mathbb{R}$ satisfy $f(YQ) = f(Y)$  for every $Y$ with  $Y^\tp Y = I$ and $Q \in \O(k+1)$. The gradient of $f$  at  $Y = Y_{\mathbf{A}+b}$ is
\[
\nabla f = f_{Y} - YY^\tp f_Y \in \T_{\mathbf{A}+b}\bigl(\Graff(k,n)\bigr),
\]
where $f_Y \in \mathbb{R}^{(n+1) \times (k+1)}$ with $(f_Y)_{ij} = \frac{\partial f}{\partial y_{ij}}$.

\item Hessian:
Let $f$ be as in \eqref{Sgrad}. The Hessian of $f$  at  $Y = Y_{\mathbf{A}+b}$ is
\begin{enumerate}[\upshape (a)]
\item as a bilinear form: $\nabla^2 f : \T_{\mathbf{A}+b}\bigl(\Graff(k,n)\bigr) \times \T_{\mathbf{A}+b}\bigl(\Graff(k,n)\bigr) \rightarrow \mathbb{R}$,
\[
\nabla^2 f(\Delta, \Delta') = f_{YY}(\Delta, \Delta') - \tr(\Delta^\tp \Delta'Y^\tp f_{Y}),
\]
where $f_{YY} \in \mathbb{R}^{(n+1)  (k+1) \times (n+1)  (k+1)}$ with  $(f_{YY})_{ij,hl} = \frac{\partial^{2} f}{\partial y_{ij} \partial y_{hl}}$ and
\[
 f_{YY}(\Delta,\Delta') = \sum\nolimits_{i,j,h,l=1}^{n+1,k+1,n+1,k+1} (f_{YY})_{ij,hl} \delta_{ij} \delta'_{hl};
\]

\item as a linear map: $ \nabla^2  f:  \T_{\mathbf{A}+b}\bigl(\Graff(k,n)\bigr) \rightarrow \T_{\mathbf{A}+b}\bigl(\Graff(k,n)\bigr)$,
\[
\nabla^2 f(\Delta) = \sum\nolimits_{i,j,h,l=1}^{n+1,k+1,n+1,k+1} (f_{YY})_{ij,hl}\delta_{ij} E_{hl} - \Delta f_Y^\tp  Y,
\]
where $E_{hl} \in \mathbb{R}^{(n+1) \times (k+1)}$ has $(h,l)$th entry $1$ and all other entries $0$.
\end{enumerate}
\end{enumerate}
\end{proposition}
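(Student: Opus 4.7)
The plan is to reduce every item to the corresponding Edelman--Arias--Smith formula for $\Gr(k+1,n+1)$ and then specialize to the Stiefel representative $Y = Y_{\mathbf{A}+b}$. Theorem~\ref{thm:alg} and Corollary~\ref{prop:smooth} already tell us that $\Graff(k,n)$ is an open submanifold of $\Gr(k+1,n+1)$ with the induced Riemannian metric, so the tangent space, metric, Levi--Civita connection, geodesics, exponential map, parallel transport, gradient, and Hessian at $\mathbf{A}+b$ all coincide with the restrictions of the corresponding Grassmannian objects to the open subset $j\bigl(\Graff(k,n)\bigr)$. By Proposition~\ref{prop:graffquot}, the matrix $Y_{\mathbf{A}+b}$ is a bona fide Stiefel representative of $j(\mathbf{A}+b)$, so the classical formulas of \cite{EAS} may be applied verbatim with $Y=Y_{\mathbf{A}+b}$.

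Under this reduction, each item becomes a direct substitution into a known formula. For (i) and (ii), the horizontal tangent space at a Stiefel representative $Y$ is $\{\Delta : Y^\tp \Delta = 0\}$ and the canonical metric is $(\Delta_1, \Delta_2) \mapsto \tr(\Delta_1^\tp \Delta_2)$. For (iii) and (iv), I would invoke the closed-form EAS expressions assembled from the thin \textsc{svd} $H = U\Sigma V^\tp$ of the initial velocity; these apply unchanged because for small enough $t$ the geodesic remains in the open set $j\bigl(\Graff(k,n)\bigr)$, so the restricted and ambient geodesics agree, and parallel transport along this geodesic is computed intrinsically on $\Gr(k+1,n+1)$. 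For (v) and (vi), the $\O(k+1)$-invariance hypothesis lets $f$ descend to a smooth function on $\Gr(k+1,n+1)$, so its Riemannian gradient is the horizontal projection $f_Y - YY^\tp f_Y$ of the Euclidean gradient; the Hessian formulas then follow by differentiating the gradient along a geodesic and using $Y^\tp \Delta = 0$ on tangent vectors together with the Gauss identity relating intrinsic and extrinsic second derivatives.

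The main obstacle, modest as it is, is bookkeeping: I need to confirm that the specific block form of $Y_{\mathbf{A}+b}$ in Definition~\ref{def:stief} does not conflict with the EAS setup, which allows any representative in $\V(k+1,n+1)$. This is precisely the content of Lemma~\ref{lem:checkgraffstief}: any $Y \in \V(k+1,n+1)$ with the required last-row structure is a valid Stiefel coordinate, and every EAS formula listed above is invariant under the right $\O(k+1)$-action, hence descends to the equivalence class and in particular evaluates correctly on $Y_{\mathbf{A}+b}$. Beyond this verification, no fresh calculation is required; the proof reduces to matching the notation of the EAS framework for $\Gr(k+1,n+1)$ to the Stiefel coordinate presentation on $\Graff(k,n)$, which is why the authors indicate that only a sketch is warranted.
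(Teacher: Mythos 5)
Your proposal is correct and follows essentially the same route as the paper's own sketch: both arguments invoke the open Riemannian embedding $j:\Graff(k,n)\hookrightarrow\Gr(k+1,n+1)$ from Theorem~\ref{thm:alg} and Corollary~\ref{prop:smooth} so that each formula is pulled back verbatim from the Edelman--Arias--Smith expressions on $\Gr(k+1,n+1)$ evaluated at the Stiefel representative $Y_{\mathbf{A}+b}$. Your additional remarks on $\O(k+1)$-invariance and the compatibility of the block form of $Y_{\mathbf{A}+b}$ via Lemma~\ref{lem:checkgraffstief} are sound bookkeeping that the paper leaves implicit.
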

\begin{proof}[Sketch of proof]
These essentially follow from the corresponding formulas for the Grassmannian in \cite{EAS, newgrass}. For instance, the Riemannian metric $g$ is induced by the canonical Riemannian metric  on $\Gr(k+1,n+1)$ \cite[Section~2.5]{EAS}, the geodesic $X(t)$ on $\Gr(k,n)$ starting at $X(0) =X_{\mathbf{A}}$ in the direction $\dot{X}(0) = H$ is given in \cite[Equation (2.65)]{EAS} as
\[
X(t) = \begin{bmatrix}
X_{\mathbf{A}} & U
\end{bmatrix} 
\begin{bmatrix}
\cos (t\Sigma) \\ \sin(t\Sigma)
\end{bmatrix} V^\tp,
\]
where $X_{\mathbf{A}}$ is the matrix representation of $X(0)$ and $H = U \Sigma V^\tp$ is a  condensed \textsc{svd} of $H$. The displayed formula in \eqref{Exp} for a geodesic in $\Graff(k,n)$ is then obtained by taking the inverse image of the corresponding geodesic in $\Gr(k+1,n+1)$ under the embedding $j$. Other formulas may be similarly obtained by the same procedure from their counterparts on the Grassmannian.
\end{proof}

Since the distance minimizing geodesic connecting two points on $\Gr(k+1,n+1)$ is not necessarily unique,\footnote{For example, there are two distance minimizing geodesics on $\Gr(1,2) \simeq \mathbb{S}^1$ for any pair of antipodal points.}
it is possible that there is more than one geodesic on $\Graff(k,n)$ connecting two given points. However, distance minimizing geodesics can all be parametrized as in Proposition \ref{prop:basicstief} even if they are not unique. In fact, we may explicitly compute the geodesic distance between any two points on $\Graff(k,n)$ as follows.
\begin{theorem}\label{thm3}
For any two affine $k$-flats $\mathbf{A}+b$ and $\mathbf{B}+c \in \Graff(k,n)$,
\[
d_{\Graff(k,n)}(\mathbf{A}+b,\mathbf{B}+c) \coloneqq d_{\Gr(k+1,n+1)}\bigl(j(\mathbf{A}+b), j(\mathbf{B}+c) \bigr),
\]
where $j$ is the embedding in \eqref{eq:j}, defines a notion of distance consistent with the geodesic distance on a Grassmannian.  If
\[
Y_{\mathbf{A} + b} =
\begin{bmatrix}
A& b_0/\sqrt{1+\lVert b_0\rVert^2}\\
0& 1/\sqrt{1+\lVert b_0\rVert^2}
\end{bmatrix},\qquad
Y_{\mathbf{B} + c}=
\begin{bmatrix}
B& c_0/\sqrt{1+\lVert c_0\rVert^2}\\
0& 1/\sqrt{1+\lVert c_0\rVert^2}
\end{bmatrix}
\]
are the matrices of Stiefel coordinates for $\mathbf{A}+b$ and $\mathbf{B}+c$ respectively, then
\begin{equation}\label{eq:graffdist}
d_{\Graff(k,n)}(\mathbf{A}+b,\mathbf{B}+c) = \Bigl(\sum\nolimits_{i=1}^{k+1} \theta_i^2\Bigr)^{1/2},
\end{equation}
where $\theta_i = \cos^{-1} \sigma_i$ and $\sigma_i$ is the $i$th singular value of $Y_{\mathbf{A} + b}^\tp  Y_{\mathbf{B} + c} \in \mathbb{R}^{(k+1) \times (k+1)}$.
\end{theorem}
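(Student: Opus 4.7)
The theorem has two ingredients: the consistency claim (the definition genuinely gives the intrinsic Riemannian distance on $\Graff(k,n)$) and the explicit singular-value formula \eqref{eq:graffdist}. Both are essentially corollaries of results already established, so the proof is short and the plan is to assemble them in the right order.

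First I would handle the consistency claim by invoking Corollary~\ref{prop:smooth}\eqref{it:gd=}. Since $\Graff(k,n)$ is equipped with the Riemannian metric induced from $\Gr(k+1,n+1)$ via the embedding $j$, and since Corollary~\ref{prop:smooth}\eqref{it:gd=} already asserts that the geodesic distance between two points of $\Graff(k,n)$ equals the geodesic distance between their images in $\Gr(k+1,n+1)$, the defining equation $d_{\Graff(k,n)}(\mathbf{A}+b,\mathbf{B}+c)\coloneqq d_{\Gr(k+1,n+1)}\bigl(j(\mathbf{A}+b),j(\mathbf{B}+c)\bigr)$ is automatically consistent with the intrinsic geodesic distance. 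This part is essentially bookkeeping.

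Second, I would derive the explicit formula \eqref{eq:graffdist} by reducing to the classical principal-angle formula for Grassmannian distance. The well-known fact (see e.g.\ \cite{EAS}) is that if $\mathbf{U},\mathbf{V}\in\Gr(m,N)$ have orthonormal bases $Y_U,Y_V\in\V(m,N)$, then the principal angles between $\mathbf{U}$ and $\mathbf{V}$ are $\theta_i=\cos^{-1}\sigma_i$, where $\sigma_1,\dots,\sigma_m$ are the singular values of $Y_U^\tp Y_V$, and the geodesic distance with the canonical metric equals $\bigl(\sum_{i=1}^{m}\theta_i^2\bigr)^{1/2}$. By Proposition~\ref{prop:graffquot} (or directly from Definition~\ref{def:stief}), the Stiefel coordinates $Y_{\mathbf{A}+b}, Y_{\mathbf{B}+c}\in \V(k+1,n+1)$ are orthonormal bases of the $(k+1)$-planes $j(\mathbf{A}+b), j(\mathbf{B}+c)\in \Gr(k+1,n+1)$ respectively. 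Substituting $m=k+1$, $N=n+1$, $Y_U=Y_{\mathbf{A}+b}$, $Y_V=Y_{\mathbf{B}+c}$ into the classical formula yields \eqref{eq:graffdist} immediately.

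One small thing worth checking explicitly is that the right-hand side of \eqref{eq:graffdist} does not depend on the choice of Stiefel representatives. By Lemma~\ref{lem:checkgraffstief}\eqref{two}, any other choices differ by right multiplication by orthogonal matrices of the form $\operatorname{diag}(Q',1)\in\O(k+1)$, and replacing $Y_{\mathbf{A}+b}$ by $Y_{\mathbf{A}+b}Q_1$ and $Y_{\mathbf{B}+c}$ by $Y_{\mathbf{B}+c}Q_2$ only changes $Y_{\mathbf{A}+b}^\tp Y_{\mathbf{B}+c}$ to $Q_1^\tp Y_{\mathbf{A}+b}^\tp Y_{\mathbf{B}+c}Q_2$, which has the same singular values. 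The only mild obstacle, if any, is simply citing the correct form of the Grassmannian principal-angle distance formula; there is no genuine geometric difficulty because Theorem~\ref{thm:alg} and Corollary~\ref{prop:smooth} have already done the geometric work of identifying $\Graff(k,n)$ isometrically with an open submanifold of $\Gr(k+1,n+1)$.
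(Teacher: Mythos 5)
Your proposal is correct and follows essentially the same route as the paper: both reduce the formula to the classical principal-angle distance on $\Gr(k+1,n+1)$ applied to the Stiefel coordinate matrices, and both verify well-definedness by noting (via Lemma~\ref{lem:checkgraffstief}\eqref{two}) that different Stiefel representatives differ by right multiplication by orthogonal matrices, which leaves the singular values of $Y_{\mathbf{A}+b}^\tp Y_{\mathbf{B}+c}$ unchanged. The only cosmetic difference is that the paper justifies the consistency claim with the remark that a nonempty subset of a metric space is a metric space, whereas you invoke Corollary~\ref{prop:smooth}\eqref{it:gd=}; both are legitimate.
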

\begin{proof}
Any nonempty subset of a metric space is a metric space. It remains to check that the definition does not depend on a choice of Stiefel coordinates. Let $Y_{\mathbf{A} + b}$ and $Y_{\mathbf{A} + b}' $ be two different matrices of Stiefel coordinates for $\mathbf{A} + b$ and  $Y_{\mathbf{B} + c}$ and $Y_{\mathbf{B} + c}' $ be two different matrices of Stiefel coordinates for $\mathbf{B} + c$. By Lemma~\ref{lem:checkgraffstief}\eqref{two}, there exist $Q_1, Q_2 \in \O(k+1)$ such that
$Y_{\mathbf{A} + b} =Y_{\mathbf{A} + b}' Q_1$, $Y_{\mathbf{B} + c} =Y_{\mathbf{B} + c}' Q_2$.
The required result then follows from
\[
\sigma_i(Y_{\mathbf{A} + b}^\tp  Y_{\mathbf{B} + c}) = \sigma_i(Q_1^\tp  Y_{\mathbf{A} + b}^{\prime \tp} Y_{\mathbf{B} + c}' Q_2) = \sigma_i(Y_{\mathbf{A} + b}^{\prime \tp} Y_{\mathbf{B} + c}'), \qquad i=1,\dots,k. \qedhere
\]
\end{proof}
The proof above also shows that $\theta_1,\dots,\theta_{k+1}$ are independent of the choice of Stiefel coordinates. We will call $\theta_i $ the $i$th \emph{affine principal angles} between the respective affine subspaces and denote it by $\theta_i(\mathbf{A}+b,\mathbf{B}+c)$. Consider the \textsc{svd},
\begin{equation}\label{eq:scsvd}
 Y_{\mathbf{A} + b}^\tp  Y_{\mathbf{B} + c} = U \Sigma V^\tp 
\end{equation}
where $U,V \in \O(k+1)$ and $\Sigma = \diag (\sigma_1,\dots,\sigma_{k+1})$. Let
\[
 Y_{\mathbf{A} + b} U = [p_1,\dots,p_{k+1}], \qquad Y_{\mathbf{B} + c} V = [q_1,\dots,q_{k+1}].
\]
We will call the pair of column vectors $(p_i,q_i)$ the $i$th \emph{affine principal vectors} between $\mathbf{A}+b$ and $\mathbf{B}+c$. These are the affine analogues of principal angles and vectors of linear subspaces \cite{BG, GVL, YL}. 

This expression for a geodesic in Proposition~\ref{prop:basicstief}\eqref{Exp} assumes that we are given an initial point and an initial direction, the following gives an alternative expression for a geodesic in $\Graff(k,n)$ that connects two given points.
\begin{corollary} \label{corollary:geodesic}
Let $\mathbf{A}+b$ and $\mathbf{B}+c\in \Graff(k,n)$. Let $\gamma : [0,1]\to \Gr(k+1,n+1)$ be the curve
\begin{equation}\label{eq:geodesic}
\gamma(t) = \spn(Y_{\mathbf{A} + b} U \cos (t\Theta) U^\tp + Q\sin (t\Theta) U^\tp),
\end{equation}
where $Q,U \in \O(k+1)$ and the diagonal matrix $\Theta \in \mathbb{R}^{(k+1) \times (k+1)}$ are determined by the \textsc{svd}
\[
(I - Y_{\mathbf{A} + b} Y_{\mathbf{A} + b}^\tp ) Y_{\mathbf{B} + c} (Y_{\mathbf{A} + b}^\tp  Y_{\mathbf{B} + c})^{-1} = Q (\tan \Theta) U^\tp .
\]
The orthogonal matrix $U$ is the same as that in \eqref{eq:scsvd} and $\Theta = \diag(\theta_1,\dots, \theta_{k+1})$ is the diagonal matrix of affine principal angles. Then $\gamma$ has the following properties:
\begin{enumerate}[\upshape (i)]
\item $\gamma$ is a distance minimizing curve connecting $j(\mathbf{A}+b)$ and $j(\mathbf{B}+c)$, i.e.,  attains  \eqref{eq:graffdist};
\item the derivative of $\gamma$ at $t=0$ is given by
\begin{equation}\label{eq:log}
\gamma'(0) =  Q \Theta U^\tp;
\end{equation}
\item\label{item:out} there is at most one value of $t\in (0,1)$ such that $\gamma(t) \notin j\bigl(\Graff(k,n)\bigr)$.
\end{enumerate} 
\end{corollary}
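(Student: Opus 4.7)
The plan is to view $\gamma$ as the geodesic on $\Gr(k+1,n+1)$ that Proposition~\ref{prop:basicstief}\eqref{Exp} produces from a specific initial direction, and then translate the three claims back to $\Graff(k,n)$ via the embedding $j$.

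\textbf{Step 1, parts (i) and (ii).} First I would set $H \coloneqq Q\Theta U^\tp$. The factor $I - Y_{\mathbf{A}+b}Y_{\mathbf{A}+b}^\tp$ in the defining SVD forces $Y_{\mathbf{A}+b}^\tp Q = 0$, so $H$ is tangent at $Y_{\mathbf{A}+b}$; since $H = Q\Theta U^\tp$ is already in condensed SVD form, Proposition~\ref{prop:basicstief}\eqref{Exp} returns precisely the $\gamma(t)$ of the statement, with $\gamma'(0) = H$, proving (ii). The endpoint $\gamma(0) = \spn(Y_{\mathbf{A}+b}) = j(\mathbf{A}+b)$ is immediate, while $\gamma(1) = j(\mathbf{B}+c)$ follows by rearranging the defining SVD into $Y_{\mathbf{B}+c}(Y_{\mathbf{A}+b}^\tp Y_{\mathbf{B}+c})^{-1} = Y_{\mathbf{A}+b} + Q\tan(\Theta)U^\tp$ and right-multiplying by $U\cos(\Theta)U^\tp$; invertibility of $Y_{\mathbf{A}+b}^\tp Y_{\mathbf{B}+c}$ already forces $\cos\theta_i > 0$ for every $i$ (its singular values being the $\cos\theta_i$ by Theorem~\ref{thm3}), so $\cos(\Theta)$ is invertible and the matrix produced on the right differs from $Y_{\mathbf{B}+c}$ by an invertible right factor, i.e.\ spans $j(\mathbf{B}+c)$. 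Since $\gamma$ is a geodesic of length $\|H\|_F = \|\Theta\|_F = (\sum_i\theta_i^2)^{1/2}$, it attains the distance in \eqref{eq:graffdist}.

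\textbf{Step 2, part (iii).} By the proof of Theorem~\ref{thm:alg}, $\gamma(t) \notin j\bigl(\Graff(k,n)\bigr)$ iff $\gamma(t) \subset e_{n+1}^\perp$, equivalently $e_{n+1}^\tp Y(t) = 0$. With $p \coloneqq U^\tp Y_{\mathbf{A}+b}^\tp e_{n+1}$ and $q \coloneqq Q^\tp e_{n+1}$, this rewrites as $v(t) \coloneqq \cos(t\Theta)p + \sin(t\Theta)q = 0$, whose $i$th coordinate is $v_i(t) = p_i\cos(t\theta_i) + q_i\sin(t\theta_i)$. For any index with $\theta_i > 0$ and $(p_i, q_i) \ne (0,0)$, consecutive zeros of $v_i$ are separated by $\pi/\theta_i \ge 2$ (because $\theta_i \le \pi/2$), so $v_i$ has at most one zero in $(0,1)$, forcing any common zero of all $v_i$ to be unique. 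If no such index exists then $v$ is constant equal to $p$, and $\|p\| = \|Y_{\mathbf{A}+b}^\tp e_{n+1}\| = 1/\sqrt{1+\|b_0\|^2} > 0$ shows $v$ never vanishes.

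The main obstacle is (iii): parts (i) and (ii) are Grassmannian bookkeeping once $H$ is identified, whereas (iii) is intrinsic to $\Graff(k,n)$ and hinges on the sharp bound $\theta_i \in [0,\pi/2]$ on affine principal angles, without which two simultaneous zeros of $v$ in $(0,1)$ could not be ruled out.
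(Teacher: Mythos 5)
Your argument is correct and follows the same route as the paper: parts (i)--(ii) come from the Edelman--Arias--Smith geodesic machinery on $\Gr(k+1,n+1)$ pulled back through $j$, and part (iii) reduces to locating the values of $t$ at which the last row of the representing matrix $Y(t)$ vanishes, since the complement of $j\bigl(\Graff(k,n)\bigr)$ consists of the $(k+1)$-planes contained in $e_{n+1}^\perp$. Where the paper's sketch merely asserts that ``a simple calculation'' gives (iii), you supply that calculation explicitly --- the coordinatewise equation $p_i\cos(t\theta_i)+q_i\sin(t\theta_i)=0$ whose zeros are spaced $\pi/\theta_i\ge 2$ apart because the affine principal angles lie in $[0,\pi/2]$ --- which is exactly the intended argument.
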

\begin{proof}[Sketch of proof]
The expression in \cite[Theorem~2.3]{EAS} for a distance minimizing geodesic connecting two points in $\Gr(k+1,n+1)$ gives \eqref{eq:geodesic}. By Theorem~\ref{thm:alg}, $\Graff(k,n)$ is embedded in $\Gr(k+1,n+1)$ as an open submanifold whose complement has measure zero. Since the complement of $\Graff(k,n)$ in $\Gr(k+1,n+1)$ comprises points with coordinates $\begin{bsmallmatrix}
A \\
0 
\end{bsmallmatrix}\in \mathbb{R}^{(n+1) \times (k+1)}$ where $A\in \mathbb{R}^{n\times (k+1)}$ and $A^\tp A = I$, a simple calculation shows that $\gamma$ has at most one point  not contained in $\Graff(k,n)$.
\end{proof}
Indeed, as the complement of $\Graff(k,n)$ in $\Gr(k+1,n+1)$ has codimension at least two, the situation $\gamma(t)\not\in \Graff(k,n)$ occurs with probability zero. For an analogue, one may think of geodesics connecting two points in $\mathbb{R}^3$ with the $x$-axis removed. This together with the proof of Theorem \ref{thm3} guarantees that  Algorithms~\ref{alg:sds}--\ref{alg:nmp} will almost never lead to a point outside $\Graff(k,n)$.

We conclude this section with the analogue of Proposition~\ref{prop:basicstief} in projection coordinates. 
\begin{proposition}\label{prop:basicproj}
The following are basic differential geometric notions on $\Graff(k,n)$ expressed in projection coordinates. We write $[X,Y] = XY-YX$ for the commutator bracket and $\mathsf{\Lambda}^2(\mathbb{R}^n)$ for the space of $n\times n$ skew symmetric matrices.
\begin{enumerate}[\upshape (i)]
\item Tangent space: The tangent space at $\mathbf{A}+b \in \Graff(k,n)$ has representation
\[
\T_{\mathbf{A} + b}\bigl(\Graff(k,n)\bigr) = \{[P_{\mathbf{A} + b},\Omega] \in \mathbb{R}^{(n+1)\times (n+1)}: \Omega \in \mathsf{\Lambda}^2(\mathbb{R}^{n+1})\}.
\]
\item\label{Pmetric} Riemannian metric: The Riemannian metric $g$ on $\Graff(k,n)$ is given by 
\[
g_{\mathbf{A} + b}(\Delta_1,\Delta_2) = \tr(\Delta_1^\tp \Delta_2),
\] 
where $\Delta_1,\Delta_2\in \T_{\mathbf{A} + b}\bigl(\Graff(k,n)\bigr) $, i.e., $\Delta_i = [P_{\mathbf{A} + b}, \Omega_i]$ for some $\Omega_i \in \mathsf{\Lambda}^2(\mathbb{R}^{n+1}), i=1,2$.

\item Exponential map:
Let $P = P_{\mathbf{A} + b}$ and $\Theta \in \mathbb{R}^{(n+1) \times (n+1)}$ be such that $[[P,\Omega],P] = \Theta^\tp  \begin{bsmallmatrix} 0&Z \\ -Z^\tp &0 \end{bsmallmatrix} \Theta$ and $P = \Theta^\tp   \begin{bsmallmatrix} I_{k+1}&0 \\ 0&0 \end{bsmallmatrix}  \Theta$. The exponential map is given by
\[
\exp_{\mathbf{A} + b}([P,\Omega]) = \frac{1}{2} I_{n+1} + \Theta^{T} \begin{bmatrix} \frac{1}{2}\cos(2\sqrt{ZZ^\tp })&-\sinc(2\sqrt{ZZ^\tp })Z \\ -Z^\tp \sinc(2\sqrt{ZZ^\tp })&-\frac{1}{2}\sin(2\sqrt{Z^\tp Z}) \end{bmatrix}  \Theta.
\]

\item\label{Pgrad} Gradient: 
Let $f: \mathbb{R}^{(n+1) \times (n+1)} \rightarrow \mathbb{R}$. The gradient of $f$  at  $P = P_{\mathbf{A}+b}$ is
\[
\nabla f = [P,[P,f_P]] \in \T_{\mathbf{A} + b}\bigl(\Graff(k,n)\bigr),
\]
where $f_P \in \mathbb{R}^{(n+1) \times (n+1)}$ with $(f_P)_{ij} = \frac{\partial f}{\partial p_{ij}}$.

\item Hessian:
Let $f$ and $f_P$ be as in \eqref{Pgrad}. The Hessian of $f$  at  $P = P_{\mathbf{A}+b}$ is
\begin{enumerate}[\upshape (a)]
\item as a bilinear form: $\nabla^2 f : \T_{\mathbf{A} + b}\bigl(\Graff(k,n)\bigr) \times \T_{\mathbf{A} + b}\bigl(\Graff(k,n)\bigr) \rightarrow \mathbb{R}$,
\[
\nabla^2 f(\Delta,\Delta') = \tr\Bigl( \bigl( \bigl[P, \bigl[P,\sum\nolimits_{i,j,h,l=1}^{n+1} (f_{PP})_{ij,hl} \delta_{ij} E_{hl}\bigr]\bigr]
- \frac{1}{2}[P,[\nabla f, \Delta]] - \frac{1}{2}[\nabla f, [P, \Delta]] \bigr)\Delta'\Bigr),
\]
where $f_{PP} \in \mathbb{R}^{(n+1)^2 \times (n+1)^2}$ with $(f_{PP})_{ij,hl} = \frac{\partial^{2} f}{\partial p_{ij} \partial p_{hl}}$ and $E_{hl} \in \mathbb{R}^{(n+1) \times (n+1)}$ has $(h,l)$th entry $1$ and all other entries $0$;

\item as a linear map: $ \nabla^2  f:  \T_{\mathbf{A} + b}\bigl(\Graff(k,n)\bigr) \rightarrow \T_{\mathbf{A} + b}\bigl(\Graff(k,n)\bigr)$,
\[
\nabla^2 f(\Delta) = \bigl[P,\bigl[P,\sum\nolimits_{i,j,h,l=1}^{n+1} (f_{PP})_{ij,hl} \delta_{ij} E_{hl}\bigr]\bigr] - \frac{1}{2}[P,[\nabla f, \Delta]] - \frac{1}{2} [\nabla f, [P, \Delta]].
\]
\end{enumerate}
\end{enumerate}
\end{proposition}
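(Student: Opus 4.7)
The plan is to deduce every formula by pulling back the corresponding formula on $\Gr(k+1,n+1)$, expressed in projection matrix coordinates as in \cite{newgrass}, through the open embedding $j:\Graff(k,n) \to \Gr(k+1,n+1)$ of Theorem~\ref{thm:alg}. Since $j$ identifies $\Graff(k,n)$ with an open submanifold of full dimension, tangent spaces, the Riemannian metric, exponential map, gradients, and Hessians on $\Graff(k,n)$ coincide with those on $\Gr(k+1,n+1)$ restricted to $j(\Graff(k,n))$. Moreover, by Lemma~\ref{lem:interchange}\eqref{StoP}, the projection coordinate matrix $P_{\mathbf{A}+b}=Y_{\mathbf{A}+b}Y_{\mathbf{A}+b}^\tp$ is precisely the projection matrix representation of $j(\mathbf{A}+b)\in \Gr(k+1,n+1)$. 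So the task reduces to quoting each Grassmannian formula with $(k,n)$ replaced by $(k+1,n+1)$ and $P$ set to $P_{\mathbf{A}+b}$.

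For (i) and (ii), I would invoke the standard description of $\Gr(m,N)$ as an adjoint orbit of $\O(N)$ on rank-$m$ orthogonal projections: the tangent space at a projection $P$ is the image of the adjoint action of $\mathfrak{o}(N) = \mathsf{\Lambda}^2(\mathbb{R}^N)$, namely $\{[P,\Omega]: \Omega\in\mathsf{\Lambda}^2(\mathbb{R}^N)\}$, and the canonical metric is the Frobenius trace form restricted to this subspace. Specializing $(m,N)=(k+1,n+1)$ and $P=P_{\mathbf{A}+b}$ gives (i) and (ii).

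For (iii), the key structural fact is that any orthogonal projection of rank $k+1$ can be diagonalized as $P = \Theta^\tp \begin{bsmallmatrix} I_{k+1}& 0\\ 0&0\end{bsmallmatrix}\Theta$ with $\Theta\in\O(n+1)$, and under this same conjugation the tangent direction $[P,\Omega]$ takes the off-diagonal form $\Theta^\tp\begin{bsmallmatrix} 0 & Z\\ -Z^\tp & 0\end{bsmallmatrix}\Theta$; this reduces the geodesic flow to computing the conjugation action of the matrix exponential of a block skew-symmetric matrix on a block-diagonal projection. The resulting trigonometric functions of $\sqrt{ZZ^\tp}$ and $\sqrt{Z^\tp Z}$ are what appears in the stated formula, and this is the main technical calculation, but it is carried out in full in \cite[\S3]{newgrass}; I would simply cite it rather than redo it.

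For (iv), I would use the characterization of the Riemannian gradient as the orthogonal projection of the ambient matrix gradient $f_P$ onto the tangent subspace $\{[P,\Omega]: \Omega\in\mathsf{\Lambda}^2(\mathbb{R}^{n+1})\}$ with respect to the trace inner product; a direct computation using $P^2=P=P^\tp$ shows that this projector is the double commutator $X\mapsto [P,[P,X]]$, giving $\nabla f = [P,[P,f_P]]$. For (v), I would obtain the Hessian by differentiating the gradient along a geodesic $P(t)$ with $\dot P(0)=\Delta$: one chain-rule derivative produces the term $[P,[P,\sum(f_{PP})_{ij,hl}\delta_{ij}E_{hl}]]$, while the remaining contributions from differentiating the double commutator $[P(t),[P(t),f_{P}(P(0))]]$ at $t=0$, followed by the symmetrization required to land in the tangent space, yield the two correction terms $-\tfrac{1}{2}[P,[\nabla f,\Delta]]$ and $-\tfrac{1}{2}[\nabla f,[P,\Delta]]$. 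The symmetrization step is the subtlest point and is the one that would need the most care; again the full derivation is in \cite{newgrass}, so I would only sketch it and cite the reference.
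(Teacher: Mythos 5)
Your proposal is correct and follows essentially the same route as the paper: the paper's proof is a one-line remark that all the formulas follow from their Grassmannian counterparts in the cited reference by applying $j^{-1}$, exactly the pull-back-through-the-open-embedding strategy you describe. Your additional detail on how each item specializes (adjoint-orbit tangent space, double-commutator projector for the gradient, differentiation along geodesics for the Hessian) is a faithful elaboration of what the paper leaves to the reference rather than a different argument.
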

\begin{proof}[Sketch of proof]
Again, these formulas follow from their counterparts on Grassmannian manifolds in \cite{newgrass} by applying $j^{-1}$, as we did in the proof of Proposition~\ref{prop:basicstief}.
\end{proof}
A notable omission in Proposition~\ref{prop:basicproj} is a formula for parallel transport. While parallel transport on $\Graff(k,n)$ in Stiefel coordinates takes a relatively simple form in Proposition~\ref{prop:basicstief}, its explicit expression in projection coordinates is extremely complicated, and as a result unilluminating and error-prone. We do not recommend computing parallel transport in projection coordinates --- one should instead change projection coordinates to Stiefel coordinates by Lemma~\ref{lem:interchange}, compute parallel transport in Stiefel coordinates using Proposition~\ref{prop:basicstief}\eqref{Par}, and then transform the result back to projection coordinates by Lemma~\ref{lem:interchange} again. 

\section{Steepest descent, conjugate gradient, and Newton method on the affine Grassmannian}\label{sec:optim}

We now describe the methods of steepest descent, conjugate gradient, and Newton on the affine Grassmannian. The steepest descent and Newton methods are given in both Stiefel coordinates (Algorithms~\ref{alg:sds} and \ref{alg:nms}) and projection coordinates (Algorithms~\ref{alg:sdp} and \ref{alg:nmp}) but the conjugate gradient method is only given in Stiefel coordinates (Algorithm~\ref{alg:cgs}) as we do not have a closed-form expression for parallel transport in projection coordinates. 

We will rely on our embedding of  $\Graff(k, n)$ into $\Gr(k+1,n+1)$ via Stiefel or projection coordinates as given by Propositions~\ref{prop:graffquot} and \ref{prop:graffproj} respectively. We  then borrow the corresponding methods on the Grassmannian developed in \cite{AMS, EAS} in conjunction with Propositions~\ref{prop:basicstief} and \ref{prop:basicproj}.
\begin{algorithm}[h]
  \caption{Steepest descent in Stiefel coordinates}
  \label{alg:sds}
  \begin{algorithmic}
    \State Initialize $\mathbf{A}_0+b_{0} \in \Graff(k,n)$ in Stiefel coordinates $Y_0 \coloneqq Y_{\mathbf{A}_0+b_{0}} \in \mathbb{R}^{(n+1) \times (k+1)}$.
    \For {$i =0, 1, \dots$}
    \State set $G_{i} = f_{Y}(Y_{i}) - Y_{i}Y_{i}^\tp f_{Y}(Y_i)$; \Comment{gradient of $f$ at $Y_{i}$}
    \State compute $-G_i = U\Sigma V^\tp $; \Comment{condensed \textsc{svd}}
    \State minimize $f(Y(t))= f(Y_{i}V\cos(t\Sigma )V^\tp  + U\sin(t\Sigma )V^\tp ) \text{ over } t \in \mathbb{R}$; \Comment{exact line search}
    \State set $Y_{i+1} = Y(t_{\min})$;
    \EndFor 
  \end{algorithmic}
\end{algorithm}

\begin{algorithm}[h]
  \caption{Conjugate gradient in Stiefel coordinates}
  \label{alg:cgs}
  \begin{algorithmic}
    \State Initialize $\mathbf{A}_0+b_{0} \in \Graff(k,n)$ in Stiefel coordinates $Y_0 \coloneqq Y_{\mathbf{A}_0+b_{0}} \in \mathbb{R}^{(n+1) \times (k+1)}$.
    \State Set $G_{0} = f_{Y}(Y_{0}) - Y_{0}Y_{0}^\tp f_{Y}(Y_0)$ and $H_0 = -G_0$.
    \For {$i =0,1, \dots$}
    \State compute $H_i = U\Sigma V^\tp $; \Comment{condensed \textsc{svd}}
    \State minimize $f(Y(t))= f(Y_{i}V\cos(t\Sigma )V^\tp  + U\sin(t\Sigma )V^\tp ) \text{ over } t \in \mathbb{R}$; \Comment{exact line search}
    \State set $Y_{i+1} = Y(t_{\min})$;
    \State set $G_{i+1} = f_{Y}(Y_{i+1}) - Y_{i+1}Y_{i+1}^\tp f_{Y}(Y_{i+1})$; \Comment{gradient of $f$ at $Y_{i+1}$}
    \Procedure{Descent}{$Y_i,G_i,H_i$} \Comment{set new descent direction at $Y_{i+1}$}
    \State $\tau H_{i} = (-Y_{i}V\sin(t_{\min}\Sigma)+U\cos(t_{\min}\Sigma))\Sigma V^\tp $; \Comment{parallel transport of $H_i$}
    \State $\tau G_{i} = G_{i} - \bigl(Y_{i}V\sin(t_{\min}\Sigma) + U(I-\cos(t_{\min}\Sigma))\bigr)U^\tp G_{i}$; \Comment{parallel transport of $G_i$}
    \State $\gamma_i = \tr((G_{i+1}-\tau G_i)^\tp G_{i+1})/\tr(G_i^\tp G_i)$;
    \State $H_{i+1} = -G_{i+1} + \gamma_{i} \tau H_i$;
    \EndProcedure
    \State reset $H_{i+1} = -G_{i+1}$ if $i+1 \equiv 0 \mod (k+1)(n-k)$;
    \EndFor
  \end{algorithmic}
\end{algorithm}

\begin{algorithm}[h]
  \caption{Newton's method in Stiefel coordinates}
  \label{alg:nms}
  \begin{algorithmic}
    \State Initialize $\mathbf{A}_0+b_{0} \in \Graff(k,n)$ in Stiefel coordinates $Y_0 \coloneqq Y_{\mathbf{A}_0+b_{0}} \in \mathbb{R}^{(n+1) \times (k+1)}$.
    \For {$i =0, 1, \dots$}
    \State set $G_{i} = f_{Y}(Y_{i}) - Y_{i}Y_{i}^\tp f_{Y}(Y_i)$; \Comment{gradient of $f$ at $Y_{i}$}
    \State find $\Delta$ such that $Y_{i}^\tp \Delta = 0$ and $\nabla^2 f(\Delta) - \Delta(Y_{i}^\tp f_Y(Y_{i})) = -G$;
    \State compute $\Delta = U\Sigma V^\tp $; \Comment{condensed \textsc{svd}}
    \State $Y_{i+1} = Y_{i}V\cos(t\Sigma )V^\tp  + U\sin(t\Sigma )V^\tp $; \Comment{arbitrary step size $t$}
    \EndFor
  \end{algorithmic}
\end{algorithm}

\begin{algorithm}[h]
  \caption{Steepest descent in projection coordinates}
  \label{alg:sdp}
  \begin{algorithmic}
    \State Initialize $\mathbf{A}_0+b_{0} \in \Graff(k,n)$ in projection coordinates $P_0 \coloneqq P_{\mathbf{A}_0+b_{0}} \in \mathbb{R}^{(n+1) \times (n+1)}$.
    \For {$i =0,1, \dots$}
    \State set $\nabla f(P_i) = [P_i,[P_i,f_{P}(P_i)]]$;
    \State find $\Theta \in \mathbb{R}^{(n+1) \times (n+1)}$ and $t > 0$ so that
$P_i = \Theta^\tp \begin{bsmallmatrix} I_{k+1}&0 \\ 0&0 \end{bsmallmatrix} \Theta$ and $-t\nabla f(P_i) = \begin{bsmallmatrix} 0&Z \\ -Z^\tp &0 \end{bsmallmatrix} $;
    \State set $P_{i+1} =  \frac{1}{2} I_{n+1} + \Theta^{T} \begin{bsmallmatrix} \frac{1}{2}\cos(2\sqrt{ZZ^\tp })&-\sinc(2\sqrt{ZZ^\tp })Z \\ -Z^\tp \sinc(2\sqrt{ZZ^\tp })&-\frac{1}{2}\sin(2\sqrt{Z^\tp Z}) \end{bsmallmatrix}  \Theta$;
    \EndFor 
  \end{algorithmic}
\end{algorithm}

\begin{algorithm}[h]
  \caption{Newton's method in projection coordinates}
  \label{alg:nmp}
  \begin{algorithmic}
    \State Initialize $\mathbf{A}_0+b_{0} \in \Graff(k,n)$ in projection coordinates $P_0 \coloneqq P_{\mathbf{A}_0+b_{0}} \in \mathbb{R}^{(n+1) \times (n+1)}$.
    \For {$i =0,1, \dots$}
    \State find  $\Omega_i \in \mathsf{\Lambda}^2(\mathbb{R}^{n+1})$ such that
\[
[P_i,[P_i,  \nabla^2 f([P_i,[P_i,\Omega_i]])]] - [P_i,[\nabla f(P_i), [P_i,\Omega_i]]] = - [P_i,[P_i,\nabla f(P_i)]];
\]
    \State find $\Theta_i \in \SO(n+1)$ such that $P_i = \Theta_{i}^\tp  \begin{bsmallmatrix} I_{k+1}&0 \\ 0&0 \end{bsmallmatrix}  \Theta_i$; \Comment{QR factorization}
    \State compute  $\Theta_i(I-[P_i,[P_i,t\Omega_i]])\Theta_{i}^\tp  = Q_i R_i$; \Comment{QR factorization with positive diagonal in $R_i$}
    \State set $P_{i+1} = \Theta_{i}^\tp  Q_i \Theta_i P_i \Theta_{i}^\tp  Q_i^\tp \Theta_i$;
    \EndFor
  \end{algorithmic}
\end{algorithm}

There is one caveat: Algorithms~\ref{alg:sds}--\ref{alg:nmp} are formulated as \emph{infeasible methods}. If we start from a point in $\Graff(k,n)$, regarded as a subset of $\Gr(k+1,n+1)$, the next iterate along the geodesic may become \emph{infeasible}, i.e., fall outside $\Graff(k,n)$. By Theorem~\ref{thm:alg},
this will occurs with probability zero  but even if it does, the algorithms will still work fine as algorithms on $\Gr(k+1,n+1)$.

If desired, we may undertake a more careful \emph{prediction--correction} approach. Instead of having the points $Y_{i+1}$ (in Stiefel coordinates) or $P_{i+1}$ (in projection coordinates) be the next iterates, they will be `predictors' of the next iterates. We will then use Lemmas~\ref{lem:checkgraffstief} or \ref{lem:checkgraffproj} to check if $Y_{i+1}$ or $P_{i+1}$ are in $\Graff(k,n)$. In the unlikely scenario when they do fall outside $\Graff(k,n)$, e.g., if we have $Y_{i+1} = \begin{bsmallmatrix} A & b \\ 0 & \gamma \end{bsmallmatrix}$ where $A^\tp b \ne 0$ or $P_{i+1} = \begin{bsmallmatrix} S & d \\ d^\tp  & \gamma \end{bsmallmatrix}$ where $Sd \ne 0$, we will `correct' the iterates to feasible points $\widetilde{Y}_{i+1}$ or $\widetilde{P}_{i+1}$ by an appropriate reorthogonalization.

\section{Numerical experiments}\label{sec:numerical}

We will present various numerical experiments on two problems to illustrate the conjugate gradient and steepest descent algorithms in Section~\ref{sec:optim}. These problems are deliberately chosen to be non-trivial and yet have closed-form solutions --- so that we may check whether our algorithms have converged to the true solutions of these problems. We implemented  Algorithms~\ref{alg:sds} and \ref{alg:cgs} in \textsc{Matlab} and \textsc{Python} and used a combination of (i) Frobenius norm of the Riemannian gradient, (ii) distance between successive iterates, and (iii) number of iterations, for our stopping condition.

\subsection{Eigenvalue problem coupled with quadratic fractional programming}\label{sec:evpqfp}

Let $A \in \mathbb{R}^{n \times n}$ be symmetric, $b \in \mathbb{R}^{n}, $ and $c \in \mathbb{R}. $
We would like to solve
\begin{equation}\label{eq:evpqfp}
\begin{tabular}{rl}
minimize & $ \tr(X^\tp AX) + \dfrac{1}{1+ \lVert y\rVert^{2}}(y^\tp Ay + 2b^\tp y + c) $,\\[3ex]
subject to & $X^\tp X = I, \; X^\tp y = 0$,
\end{tabular}
\end{equation}
over all $X \in \mathbb{R}^{n \times k}$ and $y \in \mathbb{R}^n$. If we set $y = 0$ in \eqref{eq:evpqfp}, the resulting quadratic trace minimization problem with orthonormal constraints is essentially a \emph{symmetric eigenvalue problem}; if we set $X = 0$ in \eqref{eq:evpqfp}, the resulting nonconvex optimization problem is called \emph{quadratic fractional programming}.

By rearranging terms, \eqref{eq:evpqfp} transforms into a minimization problem over an affine Grassmannian,
\begin{equation}\label{eq:evpqfp2}
\min_{\mathbf{X}+y \in \Graff(k, n)} \tr \biggl( \begin{bmatrix} X&y/\sqrt{1 + \lVert y\rVert^{2}} \\ 0&1/\sqrt{1 + \lVert y\rVert^{2}} \end{bmatrix}^\tp \begin{bmatrix} A&b \\ b^\tp &c \end{bmatrix} \begin{bmatrix} X&y/\sqrt{1 + \lVert y\rVert^{2}} \\ 0&1/\sqrt{1 + \lVert y\rVert^{2}} \end{bmatrix} \biggr),
\end{equation}
which  shows that the problem \eqref{eq:evpqfp} is in fact coordinate independent, depending on  $X$ and $y$ only through the affine subspace  $\spn(X) + y = \mathbf{X} + y$. Formulated in this manner, we may determine a closed-form solution  via the eigenvalue decomposition of $\begin{bsmallmatrix} A&b \\ b^\tp &c \end{bsmallmatrix}$ --- the optimum value is the sum of the  $k+1$ smallest eigenvalues.

Figure~\ref{fig:convergence} shows convergence trajectories of steepest descent and conjugate gradient in Stiefel coordinates, i.e., Algorithms~\ref{alg:sds} and \ref{alg:cgs},  on  $\Graff(3,6)$ for the problem \eqref{eq:evpqfp2}.  $\Graff(3,6)$ is a $12$-dimensional manifold; we generate $A \in \mathbb{R}^{6 \times 6}$, $b \in \mathbb{R}^{6}$, $c \in \mathbb{R}$ randomly with $\mathcal{N}(0,1)$  entries, and likewise pick a random initial point in $\Graff(3,6)$. The gradient of $f(Y) \coloneqq \tr\bigl(Y^\tp \begin{bsmallmatrix} A&b \\ b^\tp &c \end{bsmallmatrix} Y\bigr)$ is given by $\nabla f(Y) = \begin{bsmallmatrix} A&b \\ b^\tp &c \end{bsmallmatrix}Y$. Both algorithms converge to the true solution but conjugate gradient converges twice as fast when measured by the number of iterations, taking around $20$ iterations for near-zero error reduction as opposed to steepest descent's $40$ iterations. The caveat is that each iteration of conjugate gradient is more involved and requires roughly twice the amount of time it takes for each iteration of steepest descent.
\begin{figure}
  \centering
    \includegraphics[scale=0.6]{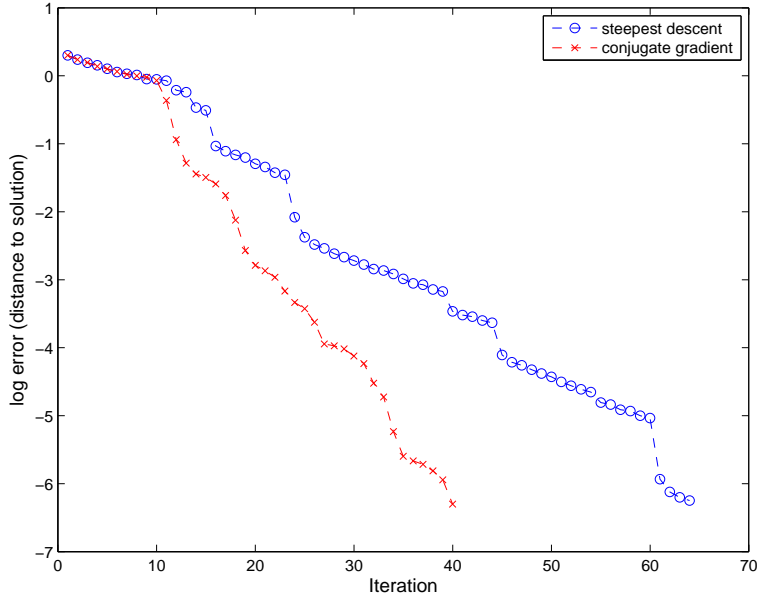}
\vspace{-3ex}
      \caption{Convergence trajectories of steepest descent and conjugate gradient for a quadratic optimization problem on the affine Grassmannian $\Graff(3,6)$.}
    \label{fig:convergence}
\end{figure}

We perform more extensive experiments by taking the average of $100$ instances of the problem \eqref{eq:evpqfp} for various values of $k$ and $n$  to generate  tables of timing and accuracy. Table~\ref{table:distevk} and \ref{table:distevn}  show the robustness of the algorithm with respect to different choices of $k$ and $n$.
\begin{table}[ht]
\renewcommand{\arraystretch}{1.3}
	\centering
	\begin{tabular}{r|rrrrrrrrrrr}
		$k$ & 10 & 21 & 32 & 43 & 54 & 65 & 76 & 87 & 98\\
		\hline\hline
		Steepest descent ($\times 10^{-6}$) & $0.61$ & $3.1$ & $1.5$ & $1.7$ & $2.9$ & $6.8$ & $1.2$ & $0.25$ & $0.1$ \\ 
		\hline
		Conjugate gradient ($\times 10^{-8}$) & $0.77$ & 1.5 & 1.9 & 2.4 & 2.3 & 2.9 & 3.1 & 3.5 & 3.3
	\end{tabular} 
\vspace{1ex}
	\caption{Accuracy (distance to true solution) of steepest descent and conjugate gradient for a quadratic optimization problem on  $\Graff(k,100)$.}
	\label{table:distevk}
\vspace{-4ex}
\end{table}
\begin{table}[ht]
\renewcommand{\arraystretch}{1.3}
	\centering
	\begin{tabular}{r|rrrrrrrrrrr}
		$n$ & 7 & 17 & 27 & 37 & 47 & 57 & 67 & 77 & 87\\
		\hline\hline
		Steepest descent ($\times 10^{-7}$) & $4.4$ & $4.8$ & $4.4$ & $4.7$ & $4.7$ & $4.7$ & $4.3$ & $4.7$ & $4.1$ \\ 
		\hline
		Conjugate gradient ($\times 10^{-6}$) & $0.83$ & $0.98$ & 1.0 & 1.3 & 1.2 & 1.3 & 1.5 & 1.6 & 1.5
	\end{tabular}
\vspace{1ex}
	\caption{Accuracy (distance to true solution) of steepest descent and conjugate gradient for a quadratic optimization problem on $\Graff(6,n)$.}
\label{table:distevn}
\vspace{-4ex}
\end{table}
\begin{table}[ht]
\renewcommand{\arraystretch}{1.3}
	\centering
	\begin{tabular}{r|rrrrrrrrrrr}
		 $k$ & 10 & 21 & 32 & 43 & 54 & 65 & 76 & 87 & 98\\
		\hline\hline
		Steepest descent & 0.6 & 0.89 & 1.4 & 1.4 & 1.8 & 1.9 & 2.0 & 2.0 & 1.3 \\ 
		\hline
		Conjugate gradient & 0.18 & 0.26 & 0.35 & 0.39 & 0.49 & 0.48 & 0.51 & 0.51 & 0.41 
	\end{tabular}
\vspace{1ex}
	\caption{Elapsed time (in seconds) of steepest descent and conjugate gradient for a quadratic optimization problem on $\Graff(k,100)$.}
\label{table:timeevk}
\vspace{-4ex}
\end{table}
\begin{table}[ht]
\renewcommand{\arraystretch}{1.3}
	\centering
	\begin{tabular}{r|rrrrrrrrrrr}
		 $n$ & 7 & 17 & 27 & 37 & 47 & 57 & 67 & 77 & 87\\
		\hline\hline
		Steepest descent & 0.67 & 0.96 & 0.94 & 1.1 & 1.2 & 1.3 & 1.4 & 1.4 & 1.5 \\ 
		\hline
		Conjugate gradient & 0.23 & 0.29 & 0.3 & 0.34 & 0.33 & 0.38 & 0.39 & 0.39 & 0.42
	\end{tabular}
\vspace{1ex}
	\caption{Elapsed time (in seconds) of steepest descent and conjugate gradient for a quadratic optimization problem on $\Graff(6,n)$}
	 \label{table:timeevn}
\end{table}

Table~\ref{table:timeevk} shows a modest initial increase followed by a  decrease in  elapsed time to convergence as $k$ increases --- a reflection of the intrinsic dimension of the problem as  $\dim\bigl(\Graff(k,100)\bigr) = (k+1)(100-k)$ first increases and then decreases. On the other hand, if we fix the dimension of ambient space, Table~\ref{table:timeevn} shows that the elapsed time increases with $k$. The results indicates that the elapsed time increases with the dimension of the affine Grassmannian.

\subsection{Fr\'echet mean and Karcher mean of affine subspaces}

Let  $d = d_{\Graff(k,n)}$, the geodesic distance on $\Graff(k,n)$ as defined in \eqref{eq:graffdist}. We would like to solve for the minimizer $\mathbf{X}+ y \in \Graff(k,n)$ in the sum-of-square-distances minimization problem:
\begin{equation}\label{eq:km1}
\min_{\mathbf{X}+y \in \Graff(k,n)} \sum\nolimits_{i=1}^{m} d^2(\mathbf{A}_i+b_i,\mathbf{X}+y),
\end{equation}
where $\mathbf{A}_i+b_i \in \Graff(k,n)$, $i=1,\dots,m.$ The Riemannian gradient \cite{K} of the objective function
\begin{equation}\label{eq:km2}
f_m(\mathbf{X}+y) = \sum\nolimits_{i=1}^{m} d^2(\mathbf{A}_i+b_i,\mathbf{X}+y)
\end{equation}
is given by
\[
\nabla f_m(\mathbf{X}+y) = \frac{1}{2} \sum\nolimits_{i=1}^{m}  \log_{\mathbf{X}+y} (\mathbf{A}_i+b_i),
\]
where $\log_{\mathbf{X}+y} (\mathbf{A}+b)$ denotes the derivative of the geodesic $\gamma(t)$ connecting $\mathbf{X}+y$ and $\mathbf{A}+b$ at $t=0$, with an explicit expression given by \eqref{eq:log}. 

The global minimizer of this problem is called the \emph{Fr\'echet mean} and a local minimizer is called a \emph{Karcher mean}. For the case $m =2$, they coincide and is given by the midpoint of the geodesic connecting $\mathbf{A}_1+b_1$ and $\mathbf{A}_2+b_2$, which has a closed-form expression given by \eqref{eq:geodesic} with $t =1/2$.

We will take the $\Graff(7,19)$, a $96$-dimensional manifold, as our specific example. Our objective function is $f_2(\mathbf{X}+y) = d^2(\mathbf{A}_1+b_1,\mathbf{X}+y) + d^2(\mathbf{A}_2+b_2,\mathbf{X}+y)$
and we set our initial point as one of the two affine subspaces.

\begin{figure}[h]
  \centering
    \includegraphics[scale=0.6]{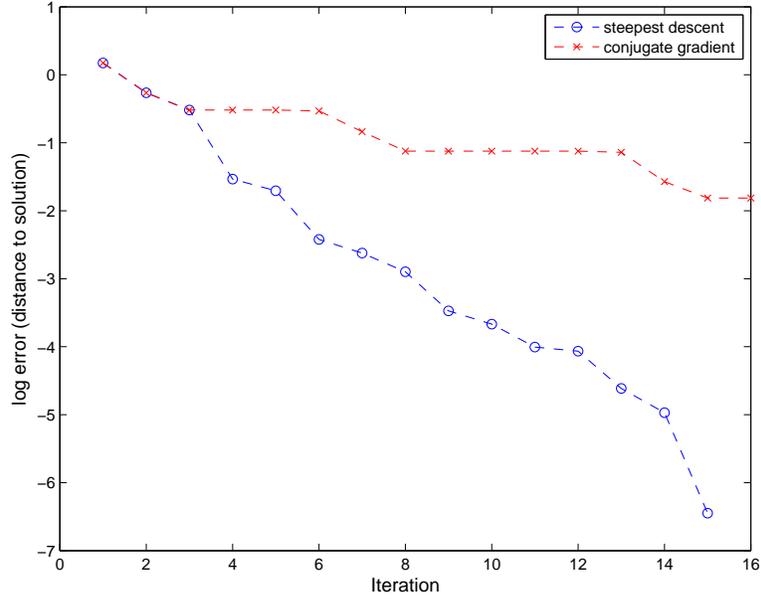}
\vspace{-3ex}
      \caption{Convergence trajectories of steepest descent and conjugate gradient  for Fr\'echet/Karcher mean  on the affine Grassmannian  $\Graff(7,19)$}
    \label{fig:convergence2}
\end{figure}

The result, depicted in Figure~\ref{fig:convergence2}, shows that steepest descent outperforms conjugate gradient in this specific example, unlike the example we considered in Section~\ref{sec:evpqfp}, which shows the opposite. So each algorithm serves a purpose for different types of problems.
In fact, when we  find the  Karcher mean of $m > 2$ affine subspaces by extending $f_m$ to the objective function in \eqref{eq:km1}, we see faster convergence (as measured by actual elapsed time) in conjugate gradient instead.
\begin{table}[ht]
\renewcommand{\arraystretch}{1.3}
	\centering
	\begin{tabular}{r|rrrrrrrrrrr}
		$k$ & 1 & 2 & 3 & 4 & 5 & 6 & 7 & 8 & 9\\
		\hline\hline
		Steepest descent ($\times 10^{-7}$) & $5.3$ & $5.1$ & $4.6$ & $4.8$ & $4.4$ & $4.9$ & $4.7$ & $4.6$ & $5.0$ \\ 
		\hline
		Conjugate gradient ($\times 10^{-1}$)  & 0.5 & 2.6 & 1.5 & 1.6 & 2.7 & 2.0 & 2.0 & 2.5 & 19.0 
	\end{tabular} 
\vspace{1ex}
\caption{Accuracy (distance to true solution) of steepest descent and conjugate gradient for Fr\'echet/Karcher mean  on  $\Graff(k,10)$.}
\label{table:distmeank}
\vspace{-3ex}
\end{table}
\begin{table}[ht]
\renewcommand{\arraystretch}{1.3}
	\centering
	\begin{tabular}{r|rrrrrrrrrrr}
		$n$ & 7 & 8 & 9 & 10 & 11 & 12 & 13 & 14 & 15\\
		\hline\hline
		Steepest descent ($\times 10^{-7}$) & $4.4$ & $4.8$ & $4.4$ & $4.7$ & $4.7$ & $4.7$ & $4.3$ & $4.7$ & $4.1$ \\ 
		\hline
		Conjugate gradient  ($\times 10^{-2}$)  & $0.36$ & 1.6 & 1.3 & 1.3 & 1.2 & 1.5 & 1.5 & 1.4 & 1.6
	\end{tabular}
\vspace{1ex}
	\caption{Accuracy (distance to true solution) of steepest descent and conjugate gradient for Fr\'echet/Karcher mean  on  $\Graff(6,n)$.}
\label{table:distmeann}
\vspace{-3ex}
\end{table}
\begin{table}[ht]
\renewcommand{\arraystretch}{1.3}
	\centering
	\begin{tabular}{r|rrrrrrrrrrr}
		 $k$ & 1 & 2 & 3 & 4 & 5 & 6 & 7 & 8 & 9\\
		\hline\hline
		Steepest descent ($\times 10^{-2}$) & 4.0 & 4.6 & 4.9 & 5.1 & 5.1 & 5.5 & 5.3 & 5.1 & 5.4 \\ 
		\hline
		Conjugate gradient ($\times 10^{-2}$) & 3.6 & 4.5 & 4.9 & 5.0 & 5.4 & 5.4 & 5.3 & 4.5 & 12.0
	\end{tabular}
\vspace{1ex}
	\caption{Elapsed time (in seconds) of steepest descent and conjugate gradient for Fr\'echet/Karcher mean on $\Graff(k,10)$.}
	 \label{table:timemeank}
\vspace{-3ex}
\end{table}
\begin{table}[ht]
\renewcommand{\arraystretch}{1.3}
	\centering
	\begin{tabular}{r|rrrrrrrrrrr}
		 $n$ & 7 & 8 & 9 & 10 & 11 & 12 & 13 & 14 & 15\\
		\hline\hline
		Steepest descent ($\times 10^{-1}$) & 3.1 & 3.0 & 3.5 & 3.3 & 3.7 & 4.1 & 3.8 & 4.1 & 4.3 \\ 
		\hline
		Conjugate gradient ($\times 10^{-1}$) & 17.0  & 2.1 & 2.8 & 3.1 & 3.4 & 3.8 & 3.9 & 3.6 & 3.6
	\end{tabular}
\vspace{1ex}
	\caption{Elapsed time (in seconds) of steepest descent and conjugate gradient for Fr\'echet/Karcher mean on $\Graff(6,n)$.}
	 \label{table:timemeann}
\end{table}

More extensive numerical experiments indicate that steepest descent and conjugate gradient are about equally fast for minimizing \eqref{eq:km2}, see Tables~\ref{table:timemeank}  and \ref{table:timemeann}, but that steepest decent is more accurate by orders of magnitude, see Tables~\ref{table:distmeank} and \ref{table:distmeann}. While these numerical experiments are intended for testing our algorithms, we would like to point out their potential application to \emph{model averaging}, i.e., aggregating affine subspaces estimated from different datasets.


\section{Conclusion}

We introduce the affine Grassmannian $\Graff(k,n)$, study its basic differential geometric properties, and develop several concrete systems of  coordinates --- three simple ones that are handy in proofs and two more sophisticated ones intended for computations; the latter two we called Stiefel and projection coordinates respectively. We show that when expressed in terms of Stiefel or projection coordinates, basic geometric objects  on $\Graff(k,n)$ may be readily represented as matrices and manipulated with standard routines in numerical linear algebra. With these in place, we ported the three standard Riemannian optimization algorithms  on the Grassmannian --- steepest descent, conjugate gradient, and Newton method --- to the affine Grassmannian. We demonstrated the efficacy of the first two algorithms through  extensive numerical experiments on two nontrivial problems with closed-form solutions, which allows us to ascertain the correctness of our results. The encouraging outcomes in these experiments provide a positive outlook towards further potential applications of our framework. Our hope is that numerical algorithms on the affine Grassmannian could become a mainstay in statistics and machine learning, where estimation problems may often be formulated as optimization problems on $\Graff(k,n)$.

\section*{Acknowledgment} We thank Pierre-Antoine Absil and Tingran Gao for very helpful discussions. 
This work  is generously supported by AFOSR FA9550-13-1-0133, DARPA D15AP00109, NSF IIS 1546413, DMS 1209136, DMS 1057064, National Key R\&D Program of China Grant 2018YFA0306702 and the NSFC Grant 11688101.  LHL's work is  supported by a DARPA Director's Fellowship and the Eckhardt Faculty Fund; KY's work is supported by the Hundred Talents Program of the Chinese Academy of Sciences and the Recruitment Program of the Global Experts of China.

\end{document}